\newcommand{\dd}{\mathrm{d}}
\newcommand{\E}{\mathbb{E}}
\newcommand{\1}{\textbf{1}}
\newcommand{\R}{\mathbb{R}}
\newcommand{\Ent}[1]{h\left( #1 \right)}
\DeclareMathOperator{\Var}{Var}
\def\red{}
\newtheorem{theorem}{Theorem}
\newtheorem{lemma}[theorem]{Lemma}
\newtheorem{corollary}[theorem]{Corollary}
\newtheorem{proposition}[theorem]{Proposition}
\theoremstyle{remark}
\newtheorem{remark}[theorem]{Remark}
\theoremstyle{definition}
\newtheorem{conjecture}{Conjecture}
\begin{document}

\title{Sharp moment-entropy inequalities and capacity bounds for {\red symmetric} log-concave distributions}
\author{Mokshay Madiman}
\address{University of Delaware}
\email{madiman@udel.edu}

\author{Piotr Nayar}
\address{University of Warsaw}
\email{nayar@mimuw.edu.pl}

\author{Tomasz Tkocz}
\address{Carnegie Mellon University}
\email{ttkocz@math.cmu.edu}

\thanks{M.M. was supported in part by the U.S. National Science Foundation through the grant DMS-1409504. 
P.\!~N. was partially supported by the National Science Centre Poland grants
2015/18/A/ST1/00553 and 2018/31/D/ST1/01355. The research leading to these results is part of a project that has received funding from the 
European Research Council (ERC) under the European Union's Horizon 2020 research and innovation programme (grant agreement No 637851).
T.T. was supported in part by the Simons Foundation Collaboration grant and the U.S. National Science Foundation through the grant DMS-1955175.
This work was also supported by the NSF under Grant No. 1440140, while the authors were in residence at the Mathematical Sciences Research Institute 
in Berkeley, California, for the ``Geometric and Functional Analysis'' program during the fall semester of 2017. A preliminary version \cite{MNT19:isit}
of this paper was presented at ISIT 2019.}

\begin{abstract}
We show that the uniform distribution minimizes entropy among all one-dimensional symmetric log-concave distributions with fixed variance,
as well as various generalizations of this fact to R\'enyi entropies of orders less than 1 and with moment constraints involving $p$-th absolute
moments with $p\leq 2$. 
As consequences, we give new capacity bounds for additive noise channels with symmetric log-concave noises, as well as for timing channels
involving positive signal and noise where the noise has a decreasing log-concave density.
In particular, we show that the capacity of an additive noise channel with  symmetric, log-concave noise
under an average power constraint is at most 0.254 bits per channel use
greater than the capacity of an additive Gaussian noise channel with the same noise power.
Consequences for reverse entropy power inequalities and connections to the slicing problem in convex geometry are also discussed.
\end{abstract}

\maketitle

{\footnotesize
\noindent {\em 2010 Mathematics Subject Classification.} Primary 94A17; Secondary 60E15.

\noindent {\em Key words.} entropy, log-concave, additive noise channel.
}
\bigskip

\section{Introduction}

It is a classical fact going back to Boltzmann \cite{Bol1896} that when the variance of a real-valued
random variable $X$ is kept fixed, the differential entropy is maximized by taking $X$ to be Gaussian.
As is standard in information theory, we use the definition of Shannon \cite{Sha48}: the
differential entropy (or simply {\it entropy}, henceforth, since we have no need to deal with
discrete entropy in this note) of a random vector $X$ with density $f$ is defined as
\[
\Ent{X} = {\red \Ent{f} =} -\int_{\R^n} f \log f,
\]
provided that this integral exists, this definition having a minus
sign relative to Boltzmann's $H$-functional.
It is easy to see that if one tried to {\it minimize} the entropy instead of maximizing it, there is no
minimum among random variables with densities-- indeed, {\red appropriately approximating a discrete random variable with variance $1$ with continuous random  variables yields a sequence of probability densities with bounded variance which would have differential entropies converging to $-\infty$ (say, $f_\epsilon(x) = \frac{1}{4\epsilon}\1_{(-1-\epsilon,-1+\epsilon)}(x) + \frac{1}{4\epsilon}\1_{(1-\epsilon,1+\epsilon)}(x)$, $x \in \R$, $\epsilon\to 0$)}. Nonetheless,
it is of significant interest to identify minimizers of entropy within structured subclasses
of probability measures. 
For instance, it was observed independently by Keith Ball (unpublished)
and in \cite{BM11:it} that the question of minimizing entropy under a covariance matrix constraint within
the class of log-concave measures on $\R^n$ is intimately tied to the well known hyperplane
or slicing conjecture in convex geometry.

More generally, log-concave distributions emerge naturally from the interplay between 
information theory and convex geometry, and have recently
been a very fruitful and active topic of research (see the recent survey \cite{MMX17:0}). 
A probability density $f$ on $\R$ is said to be {\it log-concave}
if it is of the form $f = e^{-V}$ for a convex function $V: \R \to \R\cup\{\infty\}$. 
It is said to be {\it symmetric} if $f(-x)=f(x)$ for each $x\in\R$.
We note that these are very natural assumptions to place on the noise distribution
for an additive noise channel; the ubiquitous centered Gaussian noise is clearly symmetric and log-concave,
and expanding our focus to the infinite-dimensional class of symmetric, log-concave distributions
allows for much more flexibility in modeling the noise while preserving the broad
qualitative features (such as unimodality and symmetry) of the standard Gaussian noise.

Our primary goal in this note is to establish some sharp inequalities relating the entropy 
(and in fact, a more general class of R\'enyi entropies)
to moments for symmetric, log-concave distributions. 
For the sake of simplicity, we present in this introduction only the result for Shannon differential entropy.
Our main result shows that among all symmetric log-concave probability distributions on $\R$ with fixed variance, 
the uniform distribution has minimal entropy. In fact, we obtain a slightly more general result involving the $p$-th moments
for $p\leq 2$.  Let us use $\sigma_p(X)$ to denote $(\E \big[ |X|^p \big])^{1/p}$.

\begin{theorem}\label{thm:ent-pmom}
Let $X$ be a symmetric log-concave random variable and $p \in (0,2]$. Then,
\[
\Ent{X} \geq \log \sigma_p(X)+ \log\left[2(p+1)^{1/p}\right],
\]
with equality if and only if $X$ is a uniform random variable.
\end{theorem}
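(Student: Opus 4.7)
The plan is to reduce the problem to a one-dimensional extremal problem on the half-line, identify the extremals using convex analysis, and verify the resulting inequality on a one-parameter family.

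\emph{Reduction.} The inequality is scale-invariant under $X\mapsto\lambda X$, so I may normalize $\sigma_p(X)=1$ and aim at $\Ent{X}\geq\log[2(p+1)^{1/p}]$. Using symmetry, $\Ent{X}=\Ent{|X|}+\log 2$ and $\sigma_p(X)=\sigma_p(|X|)$, so it suffices to prove, for the decreasing log-concave density $g$ of $Y=|X|$ on $[0,\infty)$,
\[
\Ent{g} \geq \log\sigma_p(g) + \frac{1}{p}\log(p+1),
\]
with equality when $g$ is uniform on some $[0,a]$. Parameterizing by the convex quantile $G=F_Y^{-1}:[0,1]\to[0,\infty)$ with $G(0)=0$, a change of variable $y=G(u)$ gives $\Ent{g}=\int_0^1\log G'(u)\,du$ and $\sigma_p(g)^p=\int_0^1 G(u)^p\,du$, while the log-concavity of $g$ translates to $1/G'$ being concave on $[0,1]$.

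\emph{Extremal identification.} To minimize the resulting scale-invariant gap, I argue that the extremal densities are of truncated-Laplace form $f(x)\propto e^{-a|x|}\mathbf{1}_{|x|\leq R}$, with $a\geq 0$ and $R\in(0,\infty]$. Formal Lagrange multipliers produce, as unconstrained stationary points, the generalized Gaussians $f\propto e^{-c|x|^p}$, but these are the \emph{maximum}-entropy densities at fixed $p$th moment and hence are irrelevant to the minimum. The minimum must therefore be attained on the boundary of the log-concavity constraint, where $-\log f$ is affine on its support; combined with symmetry this gives the truncated Laplaces. The rigorous reduction goes via an extreme-point analysis of log-concave densities subject to two linear constraints ($\int f=1$ and $\int|x|^p f$ fixed), in the spirit of Fradelizi's localization.

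\emph{One-parameter verification.} On the truncated-Laplace family, after normalization the inequality becomes a single-parameter statement indexed by $t=aR\in[0,\infty]$: $t=0$ gives the uniform (where equality holds) and $t=\infty$ the full symmetric Laplace. Writing the gap $F(t):=\Ent{X}-\log\sigma_p(X)-\log[2(p+1)^{1/p}]$ explicitly via incomplete Gamma functions, one checks $F(0)=0$ and $F(t)\geq 0$ for all $t$. The restriction $p\leq 2$ enters precisely at $t=\infty$: there $F(\infty)\geq 0$ reduces to the elementary inequality $\Gamma(p+2)\leq e^p$, which holds for $p\in(0,2]$ (and in fact slightly beyond).

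The main obstacle I anticipate is the extremal-identification step. The two moment constraints are linear in $f$ but nonlinear in the convex variable $V=-\log f$, so a standard convex-analytic extreme-point argument does not apply directly; a careful symmetric adaptation of Fradelizi's localization lemma, or an approximation through smooth strongly-log-concave densities followed by a limit argument, is required to rigorously restrict the minimization to the claimed family of truncated Laplaces.
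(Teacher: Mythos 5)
Your overall strategy (compactness/extreme-point reduction in the spirit of Fradelizi--Gu\'edon localization, followed by verification on the extremal family) is exactly the paper's, but there is a genuine gap in the extremal-identification step: you identify the wrong (too small) family of extreme points. With the two linear constraints $\int f=1$ and $\int|x|^p f$ fixed, the degrees-of-freedom analysis does \emph{not} force $V=-\log f$ to be affine on $[0,\infty)$; it only forces $V$ restricted to $[0,\infty)$ to have at most one strictly positive slope. The extreme points are therefore densities of the form $f(x)=c\1_{[0,a]}(|x|)+ce^{-\gamma(|x|-a)}\1_{[a,a+b]}(|x|)$ --- a \emph{flat plateau around the origin followed by exponential decay} --- which after scaling is a genuinely two-parameter family. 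Your truncated Laplaces are only the slice $a=0$ of this family (together with the uniform at $b=0$), so your one-parameter verification indexed by $t=aR$ leaves unexamined all extremal candidates with a nontrivial plateau. Since the minimizer of $h$ is known a priori only to lie \emph{somewhere} among the extreme points, you cannot conclude without checking the whole family.

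Concretely, the verification you skip is the hard part of the proof: a two-variable inequality in $(s,t)$ (plateau half-length $s$ and exponential-tail length $t$), namely
\[
\log\Big[s^{p+1}+(p+1)\int_0^t(s+x)^p e^{-x}\,\dd x\Big]<(p+1)\log\big[s+1-e^{-t}\big]+p\,\frac{1-(1+t)e^{-t}}{s+1-e^{-t}},
\]
which the paper handles by observing the left side is convex and the right side linear in $p$ (reducing to $p=0$ and $p=2$), and then establishing the $p=2$ case via explicit polynomial estimates. Your observation that $p\le 2$ enters through $\Gamma(p+2)\le e^p$ is correct but corresponds only to the boundary case $s=0$, $t\to\infty$. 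Separately, your heuristic ``the minimum is on the boundary of the log-concavity constraint where $-\log f$ is affine'' is precisely the step that misleads you; the correct statement is that extreme points have at most two degrees of freedom, which permits one flat piece plus one affine piece of positive slope.
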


It is instructive to write this inequality using the {\it entropy power} of $X$, defined by
$$
\mathcal{N}(X)=\frac{1}{2\pi e}e^{2\Ent{X}},
$$ 
in which case it becomes
$$
\mathcal{N}(X) \geq \frac{2}{\pi e}(p+1)^{2/p} \sigma_p(X)^{2} .
$$
In the special case $p=2$ corresponding to the variance, we have the sandwich inequality
$$
\frac{6}{\pi e} \Var(X)\leq \mathcal{N}(X)\leq \Var(X) ,
$$
with both inequalities being sharp in the class of symmetric log-concave random variables (the one on the left, 
coming from Theorem~\ref{thm:ent-pmom}, giving equality uniquely for the uniform distribution, while the one on
the right, coming from the maximum entropy property of the Gaussian, giving equality uniquely for the Gaussian distribution.)
Note that $6(\pi e)^{-1}\approx 0.7026$, so the range of entropy power given variance is quite constrained for symmetric log-concave random variables. 

Theorem~\ref{thm:ent-pmom} can be viewed as a sharp version in the symmetric case of some of the 
estimates from \cite{BM11:it, MK18b}  (see also \cite{CSH17} for upper bounds on the variance 
in terms of the entropy for mixtures of densities of the form $e^{-|t|^\alpha}$). However, finding the sharp version is quite delicate and one needs significantly 
more sophisticated methods. Our argument comprises two main steps: first, we reduce the problem 
to \emph{simple} random variables (compactly supported, piecewise exponential density), 
using ideas and techniques developed by Fradelizi and Guedon \cite{FG06} in order to elucidate the 
sophisticated localization technique of Lov\'asz and Simonovits \cite{LS93},
and second, we prove a nontrivial two-point inequality in order to verify the inequality for such random variables.

This note is organized as follows. In Section~\ref{sec:cap}, we discuss the implications of Theorem~\ref{thm:ent-pmom}
for bounds on the capacity of two classes of additive noise channels. First, 
for channels with average-power-constrained, real-valued signals and symmetric, log-concave noise,
we show that the capacity is at most $\frac{1}{2}\log_2 \big(\frac{\pi e}{6}\big)\approx 0.254$ bits per channel use
greater than the capacity of an additive Gaussian noise channel with the same noise power.
Second, for channels with positive, average-amplitude-constrained signals and decreasing log-concave (positive) noise,
we show that the capacity is at most $\log_2 (e/2)\approx 0.443$ bits per channel use
greater than the capacity of an additive exponential noise channel with the same mean.
In Section~\ref{sec:slicing}, we explain the connection of  Theorem~\ref{thm:ent-pmom}
to the famous ``slicing problem'' in asymptotic convex geometry, and also remark on an 
entropy interpretation of a classical lemma (which also
plays a role in our proof of Theorem~\ref{thm:ent-pmom}). 
Section~\ref{sec:proof-ent-vs-var} contains a full proof of Theorem~\ref{thm:ent-pmom}, while Section~\ref{sec:renyi}
develops  an extension of Theorem~\ref{thm:ent-pmom} to a class of R\'enyi entropies as well as various consequences of this extension,
including to reverse entropy power inequalities.
In Section~\ref{sec:cvx-cap}, we make a very general observation about the capacity of additive noise
channels (not just for real signals and noise, but applying to a very general class of groups)--
namely, that the capacity is convex as a function of the noise distribution,
and discuss its implications.
We conclude in Section~\ref{sec:disc} with various remarks that yield some additional insight
into our results, as well as why the question of finding the best additive noise channel among
symmetric log-concave noises is non-trivial.

\section{Channel capacity bounds}
\label{sec:cap}

\subsection{Additive noise channels with real signals}
\label{ss:anc-real}

Let $X, Y$ be random vectors taking values in $\R^n$, 
with probability density functions $f$ and $u$ respectively. 
The {\it relative entropy} between $f$ and $u$ is, as usual, defined by
$$
D(X\|Y)=D(f\|u):=\int_{\R^n} f(x) \log\frac{f(x)}{u(x)} dx ,
$$
and is always nonnegative (though possibly $+\infty$).
For a random vector $X$ with density $f$ on $\R^n$ that has finite second moment (or covariance matrix), 
the {\it relative entropy from Gaussianity} is defined by
$$
D(X)=D(f):= \inf_{u} D(f\|u) = D(f\|g) ,
$$
where the infimum is taken over all Gaussian densities $u$ on $\R^n$,
and is achieved by the Gaussian density $g$ with the same mean
and covariance matrix as $X$. If $Z$ has density $g$,
then it is a classical and easy observation (see, e.g., \cite{CT06:book}) that 
$$
D(X)=h(Z)- h(X),
$$
or equivalently, $D(f)=D(f\|g)=h(g)-h(f)$. In particular, this implies that the Gaussian
is the unique maximizer of entropy when the mean and covariance matrix are fixed.

Consider the memoryless channel additive  noise $N$ 
that takes in real signals and has a power budget $P$, i.e.,
the channel can transmit in blocks any codeword $(x_1, \ldots, x_n)\in\R^n$ that
satisfies the average power constraint
$$
\frac{1}{n}\sum_{i=1}^n x_i^2 \leq P.
$$
The output produced by the channel at the receiver when $X$ is the input is
$Y=X+N$, where the noise $N$ is independent of $X$. 
Let $C_P(N)$ be the capacity of this channel, i.e., the supremum of achievable
rates (measured in bits per channel use) that can be transmitted across the channel with the receiver
being able to decode the transmitted message with vanishing error probability as block length grows.
From the classical channel coding theorem of Shannon \cite{Sha48}, we know that
$$
C_P(N)=\sup_{X: E|X|^2=P} I(X;Y) = \sup_{X: E|X|^2=P} h(X+N)-h(N) ,
$$
where $I(X;Y)$ as usual denotes the mutual information between $X$ and $Y$.
In fact, in his original paper, Shannon \cite{Sha48} not only determined the capacity of the
AWGN (additive white Gaussian noise) channel, but also formulated bounds on the capacity
when the additive noise is not Gaussian. Specifically,  \cite[Theorem 18]{Sha48} asserts that
\begin{equation}\label{eq:shannon-cap-bd}
\frac{1}{2}\log \bigg(1+\frac{P}{\mathcal{N}(N)}\bigg) \leq {\red C_P(N)} \leq \frac{1}{2}\log \bigg(\frac{P+{\red \Var(N)}}{\mathcal{N}(N)}\bigg) ,
\end{equation}
with $\mathcal{N}(N)$ being the entropy power of the noise.
The upper bound just uses the fact that the Gaussian maximizes entropy under a second moment 
constraint, while the lower bound is a simple application of {\red the Shannon-Stam entropy power inequality \cite{Sha48, Sta59}, which asserts that
\begin{equation}\label{eq:EPI}
\mathcal{N} (X+Y) \geq  \mathcal{N} (X) +  \mathcal{N} (Y),
\end{equation}
for any two independent random vectors $X$ and $Y$ in $\R^n$ for which
the three entropies in the inequality are defined (see \cite{BC15:1} for a discussion of why
just existence of $N(X)$ and $N(Y)$ is insufficient). }

A consequence of the lower bound 
in \eqref{eq:shannon-cap-bd} is that the ``worst'' additive noise is Gaussian,
in the sense that for fixed noise power, Gaussian noise minimizes capacity.
Indeed, if $Z$ is Gaussian noise with
$\Var(Z)=\Var(N)=P_N$, then 
$$
C_P(Z)=\frac{1}{2}\log \bigg(1+\frac{P}{P_N}\bigg)
=\frac{1}{2}\log \bigg(1+\frac{P}{\mathcal{N}(Z)}\bigg)
\leq \frac{1}{2}\log \bigg(1+\frac{P}{\mathcal{N}(N)}\bigg) 
\leq C_P(N).
$$
On the other hand, a consequence of the upper bound in \eqref{eq:shannon-cap-bd} is that
\begin{equation*}\begin{split}
C_P(N)&\leq C_P(Z)+\frac{1}{2}\log \bigg(\frac{P+P_N}{\mathcal{N}(N)}\bigg)- \frac{1}{2}\log \bigg(1+\frac{P}{P_N}\bigg)\\
&= C_P(Z)+\frac{1}{2}\log\frac{P_N}{\mathcal{N}(N)}\\
&= C_P(Z)+ h(Z)-h(N)\\
&=C_P(Z)+D(N) ,
\end{split}\end{equation*}
where $D(N)$ is the relative entropy of {\red $N$} from Gaussianity.
We summarize these observations, of which Ihara \cite{Iha78} developed multidimensional  and continuous-time extensions, 
in the proposition below. 

\begin{proposition}\label{prop:ihara}\cite{Sha48, Iha78}
Let $C_P(N)$ be the capacity of the additive noise channel with a noise $N$ of finite variance and
input signal power budget of $P$.
If $Z$ is a Gaussian random variable with mean 0 and variance equal to that of $N$, then
$$
C_P(Z)\leq C_P(N)\leq C_P(Z)+D(N) .
$$
\end{proposition}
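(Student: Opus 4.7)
The plan is to derive both inequalities directly from Shannon's sandwich \eqref{eq:shannon-cap-bd} by comparing with the AWGN capacity. Writing $P_N = \Var(N)$ and letting $Z$ denote the matched Gaussian, I would use the explicit formula $C_P(Z) = \frac{1}{2}\log(1+P/P_N)$ together with the fact that $\mathcal{N}(Z)=P_N$. The whole argument rests on two elementary ingredients already noted in the paper: the maximum entropy property of the Gaussian under a variance constraint, which gives $\mathcal{N}(N)\leq P_N$ with equality iff $N$ is Gaussian; and the identity $D(N) = h(Z)-h(N) = \frac{1}{2}\log\bigl(P_N/\mathcal{N}(N)\bigr)$, which is immediate from the Gaussian entropy formula.

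For the lower bound $C_P(Z)\leq C_P(N)$, I would substitute $\mathcal{N}(N)\leq P_N$ into the left-hand inequality of \eqref{eq:shannon-cap-bd} to obtain
$$
C_P(N) \;\geq\; \frac{1}{2}\log\left(1+\frac{P}{\mathcal{N}(N)}\right) \;\geq\; \frac{1}{2}\log\left(1+\frac{P}{P_N}\right) \;=\; C_P(Z).
$$
Note that only the entropy power inequality (via \eqref{eq:shannon-cap-bd}) and the maximum entropy property enter here; no properties of log-concavity are needed.

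For the upper bound $C_P(N)\leq C_P(Z) + D(N)$, I would subtract $C_P(Z)=\frac{1}{2}\log\bigl((P+P_N)/P_N\bigr)$ from the right-hand side of \eqref{eq:shannon-cap-bd}, producing
$$
C_P(N) - C_P(Z) \;\leq\; \frac{1}{2}\log\frac{P+P_N}{\mathcal{N}(N)} - \frac{1}{2}\log\frac{P+P_N}{P_N} \;=\; \frac{1}{2}\log\frac{P_N}{\mathcal{N}(N)},
$$
and then invoke the identity $\frac{1}{2}\log(P_N/\mathcal{N}(N)) = h(Z)-h(N) = D(N)$ to finish.

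I do not expect any substantive obstacle: the proposition is essentially a bookkeeping consequence of Shannon's classical bounds \eqref{eq:shannon-cap-bd} combined with the definition of relative entropy from Gaussianity. The only real content is the clean observation that the gap in Shannon's upper bound collapses to exactly $D(N)$, thereby quantifying the excess capacity over the AWGN baseline in terms of the non-Gaussianity of the noise.
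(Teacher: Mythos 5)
Your proposal is correct and follows exactly the paper's own derivation: both inequalities are read off from Shannon's bounds \eqref{eq:shannon-cap-bd}, using $\mathcal{N}(N)\leq P_N=\mathcal{N}(Z)$ for the lower bound and the identity $\frac{1}{2}\log\bigl(P_N/\mathcal{N}(N)\bigr)=h(Z)-h(N)=D(N)$ to collapse the gap in the upper bound. No differences worth noting.
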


Let us note that we may interpret Theorem~\ref{thm:ent-pmom} as a bound on relative entropy.
Specifically, we can rewrite Theorem~\ref{thm:ent-pmom} (for $p=2$)  as follows.

\begin{corollary}\label{cor:iso}
If the random variable $N$ has a symmetric, log-concave distribution, then
$$
D(N)\leq \frac{1}{2}\log \bigg(\frac{\pi e}{6}\bigg) ,
$$
with equality if and only if $N$ is uniformly distributed on an interval.
\end{corollary}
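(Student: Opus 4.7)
The plan is to simply unpack $D(N)$ and plug in Theorem~\ref{thm:ent-pmom} at $p=2$. Since $N$ is symmetric, it has mean zero, so $\sigma_2(N)^2 = \E N^2 = \Var(N)$. Let $Z$ be the Gaussian with the same variance as $N$; recall from the discussion just before the corollary that $D(N) = h(Z) - h(N)$, and $h(Z) = \tfrac{1}{2}\log(2\pi e \Var(N))$.

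Next I would invoke Theorem~\ref{thm:ent-pmom} at $p=2$, which gives $(p+1)^{1/p} = \sqrt{3}$ and therefore
\[
h(N) \geq \log \sigma_2(N) + \log(2\sqrt{3}) = \tfrac{1}{2}\log \bigl( 12\,\Var(N) \bigr),
\]
with equality exactly when $N$ is uniform on a (symmetric) interval. Subtracting this from the Gaussian entropy yields
\[
D(N) = h(Z) - h(N) \leq \tfrac{1}{2}\log(2\pi e \Var(N)) - \tfrac{1}{2}\log(12 \Var(N)) = \tfrac{1}{2}\log\!\Bigl(\tfrac{\pi e}{6}\Bigr),
\]
and the equality case is inherited verbatim from Theorem~\ref{thm:ent-pmom}.

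There is essentially no obstacle here: the corollary is a direct algebraic rearrangement once we recognize $D(N) = h(Z) - h(N)$ and match the constants. The only thing worth double-checking is that the variance of $N$ cancels cleanly (it does, because both $h(Z)$ and the lower bound on $h(N)$ are logarithmic in $\Var(N)$ with matching coefficients $\tfrac{1}{2}$), which is precisely what makes the $p=2$ case the natural one to obtain a scale-invariant bound on $D(N)$.
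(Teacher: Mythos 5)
Your proof is correct and is exactly the computation the paper has in mind: the corollary is presented there as a direct rewriting of Theorem~\ref{thm:ent-pmom} at $p=2$ via $D(N)=h(Z)-h(N)$, with the same cancellation of $\Var(N)$ and the same inherited equality case.
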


Combining Corollary~\ref{cor:iso} and Proposition~\ref{prop:ihara}, we obtain the following corollary.

\begin{corollary}\label{cor:cap}
If the random variable $N$ has a symmetric, log-concave distribution, then
$$
C_P(N)\leq C_P(Z)+ \frac{1}{2}\log \bigg(\frac{\pi e}{6}\bigg).
$$
\end{corollary}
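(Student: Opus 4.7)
My plan is to obtain the bound directly by chaining the two inputs just stated: Proposition~\ref{prop:ihara} converts the capacity gap to a relative entropy from Gaussianity, and Corollary~\ref{cor:iso} bounds that relative entropy by $\frac{1}{2}\log(\pi e/6)$ under the symmetric log-concave hypothesis. Concretely, I would apply Proposition~\ref{prop:ihara} (applicable because any log-concave density has finite moments of all orders) to obtain $C_P(N) \leq C_P(Z) + D(N)$, and then insert the upper bound $D(N) \leq \frac{1}{2}\log(\pi e/6)$ supplied by Corollary~\ref{cor:iso}.

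Since Corollary~\ref{cor:iso} is itself a rewriting of Theorem~\ref{thm:ent-pmom} at $p=2$, I would include a short self-contained verification. Symmetry of $N$ forces $\E N = 0$, so $\sigma_2(N)^2 = \Var(N)$, and the $p=2$ case of Theorem~\ref{thm:ent-pmom} reads
\[
\Ent{N} \geq \frac{1}{2}\log \Var(N) + \log\bigl(2\sqrt{3}\bigr).
\]
The Gaussian $Z$ with the same variance has $\Ent{Z} = \frac{1}{2}\log\bigl(2\pi e\,\Var(N)\bigr)$, whence
\[
D(N) = \Ent{Z} - \Ent{N} \leq \frac{1}{2}\log(2\pi e) - \log\bigl(2\sqrt{3}\bigr) = \frac{1}{2}\log\bigl(\pi e/6\bigr),
\]
with equality precisely when $N$ is uniform on a symmetric interval, matching the equality case of Theorem~\ref{thm:ent-pmom}.

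There is no real obstacle in the proof of the corollary itself: everything reduces to substitution once Theorem~\ref{thm:ent-pmom} and Proposition~\ref{prop:ihara} are in hand, and the only thing to check is that the constant $\frac{1}{2}\log(2\pi e) - \log(2\sqrt{3})$ simplifies to $\frac{1}{2}\log(\pi e/6)$. The genuine work sits upstream: on one side, Shannon's capacity inequalities \eqref{eq:shannon-cap-bd} that underwrite Proposition~\ref{prop:ihara}, and on the other, Theorem~\ref{thm:ent-pmom}, whose proof strategy (reduction to simple piecewise-exponential densities via Fradelizi--Guedon localization, followed by a two-point inequality) is the technical heart of the paper.
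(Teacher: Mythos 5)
Your proposal is correct and is exactly the paper's argument: the paper obtains Corollary~\ref{cor:cap} by combining Proposition~\ref{prop:ihara} with Corollary~\ref{cor:iso}, the latter being precisely the $p=2$ rewriting of Theorem~\ref{thm:ent-pmom} that you verify. The constant check $\frac{1}{2}\log(2\pi e)-\log(2\sqrt{3})=\frac{1}{2}\log(\pi e/6)$ is also right.
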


Corollary~\ref{cor:cap} implies that an additive noise channel with symmetric, log-concave noise
has capacity that is at most $\frac{1}{2}\log_2 \big(\frac{\pi e}{6}\big)\approx 0.254$ bits per channel use 
greater than the capacity of an AWGN channel with the same noise power. 
(We remark that for most inequalities in this
paper, the logarithms may be taken to an arbitrary base, as long as entropies, capacities, and
other information functionals also use the same base in their definitions. Therefore
to get numbers in units of bits rather than nats, all we need to do is to take the logarithm to base 2.)

We can, in fact, say more. Let us define the {\it restricted capacity}
of an additive noise channel with noise $N$ and input power constraint $P$ by
$$
C_P^{LC}(N):= \sup_{X} I(X;X+N) ,
$$
where the supremum is taken over all  symmetric and log-concave distributions for $X$ such that $\E X^2 \leq P$.

If $N$ is a symmetric and log-concave noise, using the fact that 
$I(X;X+N)=h(X+N)-h(N)$ and that $X+N$ is symmetric and log-concave when both $X$ and $N$ are,
and applying Theorem~\ref{thm:ent-pmom}, we have
\begin{equation*}\begin{split}
C_P^{LC}(N)&\,\geq \, \frac{1}{2} \log[ 12(P+P_N)]  -h(N)
\, =\, \frac{1}{2} \log\bigg[ \frac{12(P+P_N)}{2\pi e P_N}\bigg]  +D(N)\\
&\,=\frac{1}{2} \log\bigg( \frac{6}{\pi e}\bigg) + C_P(Z) +D(N)
\, \geq\, C_P(N) -\frac{1}{2} \log\bigg( \frac{\pi e}{6}\bigg) ,
\end{split}\end{equation*}
where we used Proposition~\ref{prop:ihara} for the last inequality.
Thus we obtain:

\begin{corollary}\label{cor:cap-loss}
If $N$ is a symmetric and log-concave noise,
$$
0\leq C_P(N)-C_P^{LC}(N) \leq \frac{1}{2} \log\bigg( \frac{\pi e}{6}\bigg).
$$
\end{corollary}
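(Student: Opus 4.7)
The left inequality $C_P^{LC}(N) \le C_P(N)$ is immediate from the definitions, since the supremum defining $C_P^{LC}(N)$ is taken over a strictly smaller family of input distributions than the supremum defining $C_P(N)$. The substance of the corollary is the right-hand inequality, and my plan is to lower bound $C_P^{LC}(N)$ by exhibiting a good enough symmetric log-concave input and then combine this with the Ihara-type upper bound on $C_P(N)$ from Proposition~\ref{prop:ihara}.

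For the lower bound on $C_P^{LC}(N)$, the key structural observation is that if $X$ is symmetric and log-concave and $N$ is symmetric and log-concave, then $X+N$ is again symmetric and log-concave: symmetry is immediate from symmetry of the convolution of even functions, and log-concavity of the convolution is the Pr\'ekopa-Leindler theorem. Fixing such an $X$ with $\E X^2 = P$, I then apply Theorem~\ref{thm:ent-pmom} with $p=2$ to $X+N$, obtaining
\[
h(X+N) \ge \tfrac{1}{2}\log[12\,\Var(X+N)] = \tfrac{1}{2}\log[12(P+P_N)],
\]
where $P_N := \Var(N)$ and the equality uses independence of $X$ and $N$. Writing $h(N) = \tfrac{1}{2}\log(2\pi e P_N) - D(N)$ from the definition of relative entropy from Gaussianity, and recognizing $\tfrac{1}{2}\log(1 + P/P_N)$ as $C_P(Z)$ for $Z$ a centered Gaussian with variance $P_N$, a short rearrangement of $I(X;X+N) = h(X+N) - h(N)$ yields
\[
C_P^{LC}(N) \ge h(X+N) - h(N) \ge C_P(Z) + D(N) - \tfrac{1}{2}\log(\pi e/6).
\]

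Finally, Proposition~\ref{prop:ihara} gives $C_P(N) \le C_P(Z) + D(N)$, so combining this with the previous display produces $C_P^{LC}(N) \ge C_P(N) - \tfrac{1}{2}\log(\pi e/6)$, which is exactly what is needed. There is no genuine obstacle to overcome here; all the heavy lifting is already carried by Theorem~\ref{thm:ent-pmom} and Proposition~\ref{prop:ihara}. The only conceptual point worth highlighting is the closure of the symmetric log-concave class under convolution, which is precisely what allows Theorem~\ref{thm:ent-pmom} to be applied to the channel output $X+N$ rather than merely to the noise $N$, and thereby converts a pointwise entropy comparison into a comparison of capacities.
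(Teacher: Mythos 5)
Your proposal is correct and follows essentially the same route as the paper: the left inequality from inclusion of input classes, closure of symmetric log-concave densities under convolution, Theorem~\ref{thm:ent-pmom} with $p=2$ applied to $X+N$ to get $h(X+N)\geq \frac{1}{2}\log[12(P+P_N)]$, the identity $h(N)=\frac{1}{2}\log(2\pi e P_N)-D(N)$, and finally Proposition~\ref{prop:ihara}. No gaps.
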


Corollary~\ref{cor:cap-loss} says that the loss in capacity from ``restricting'' the input distribution 
to symmetric log-concave distributions is at most $\frac{1}{2}\log_2 \big(\frac{\pi e}{6}\big)\approx 0.254$ 
bits per channel use. We note the curious appearance of the same constant as in Corollary~\ref{cor:cap},
which may suggest that an inequality holds between $C_P^{LC}(N)$ and $C_P(Z)$, but this remains
unknown.

\subsection{Additive noise channels with nonnegative signal and noise}
\label{ss:anc-pos}

Motivated by applications involving timing, such as telephone signalling or trying to send bits through queues \cite{AV96},
it is of interest to consider additive noise channels where both signal and noise are nonnegative.
Other aspects of such channels have been considered in \cite{WA05, OB06, BGN17}.

The typical setup is as follows. Messages can be encoded in blocks 
using any codeword $(x_1, \ldots, x_n)\in\R_+^n$ that satisfies the constraint
$$
\frac{1}{n}\sum_{i=1}^n x_i \leq P.
$$
The output produced by the channel at the receiver when $X$ is the input is
$Y=X+N$, where the nonnegative noise $N$ is independent of $X$. 
We call this channel the ``positive additive noise channel'' (PANC)
with noise $N$ and budget $P$, and denote its capacity by  $C_P^{+}(N)$.

Verd\'u \cite{Ver96} considers such channels  and proves the following pleasing result reminiscent 
of Proposition~\ref{prop:ihara}.

\begin{theorem} \cite{Ver96}\label{thm:verdu}
Let  $N$ be a positive random variable with mean $a$.
If $E$ denotes an exponential random variable with mean $a$, then
$$
C_P^{+}(E) \leq C_P^{+}(N) \leq C_P^{+}(E)+D(N\|E) .
$$
\end{theorem}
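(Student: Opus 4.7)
The plan is to imitate Ihara's proof of Proposition~\ref{prop:ihara}, substituting the exponential distribution for the Gaussian throughout. Two preparatory facts do most of the work. First is the maximum-entropy characterization of the exponential: if $V\geq 0$ has mean $\mu$, then $h(V)\leq \log(e\mu)$, with equality iff $V$ is exponential with mean $\mu$; this is the direct analog under a first-moment constraint of the Gaussian maximum-entropy property under a variance constraint. Second is the value of the exponential-noise capacity $C_P^+(E)=\log(1+P/a)$: the upper bound follows from the max-entropy fact applied to $X+E$ (which is positive with mean $P+a$), while the matching lower bound is attained by the input with moment generating function $(1-at)/(1-(P+a)t)$, i.e.\ the mixture placing mass $a/(P+a)$ at $0$ and being exponential of mean $P+a$ with the complementary weight. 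By construction this input renders $X+E$ itself exponential with mean $P+a$, saturating the max-entropy bound on the output.

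Given these ingredients, the upper bound in the theorem is essentially automatic. For any admissible $X$, applying the max-entropy fact to $X+N$ gives $h(X+N)\leq \log(e(P+a))$, and hence
\[
I(X;X+N)=h(X+N)-h(N)\;\leq\; \log(e(P+a))-h(N).
\]
A short calculation shows $D(N\|E)=h(E)-h(N)$: the cross-entropy $-\int f_N(x)\log f_E(x)\,\dd x=\log a+\E N/a$ collapses to $\log(ea)=h(E)$ since $\E N=a$. Substituting and taking the supremum over admissible $X$ yields
\[
C_P^+(N)\;\leq\; \log(1+P/a)+D(N\|E)\;=\;C_P^+(E)+D(N\|E).
\]

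The main obstacle is the lower bound $C_P^+(E)\leq C_P^+(N)$, asserting that the exponential is the worst positive noise under a mean constraint. In the Gaussian/variance analog, this comes painlessly from the Shannon--Stam entropy power inequality~\eqref{eq:EPI}, but no positive version of the EPI is available: a direct computation with two independent exponentials of common mean $a$ shows that $e^{h(X+Y)-1}$ is strictly less than $e^{h(X)-1}+e^{h(Y)-1}=2a$, ruling out the natural analog in the EPI direction. The right route is via the minimax representation $C_P^+(N)=\inf_Q \sup_{X:\,\E X\leq P}\E[D(P_{Y|X}\|Q)]$: at the choice of $Q$ equal to the exponential of mean $P+a$ (which is the capacity-achieving output when the noise is itself exponential), one must exhibit an admissible input whose induced rate is at least $\log(1+P/a)$ for every positive $N$ of mean $a$. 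Carrying out this last step is exactly the content of Verd\'u's construction in \cite{Ver96}, which we invoke.
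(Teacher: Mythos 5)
The paper does not actually prove this theorem --- it is quoted verbatim from Verd\'u \cite{Ver96} --- so there is no internal argument to compare against. Your upper bound is correct and complete: it is the exact analogue of the Gaussian computation in Section~\ref{ss:anc-real}, resting on the max-entropy property of the exponential under a mean constraint and on the collapse of the cross-entropy $-\int f_N\log f_E$ to $h(E)=\log(ea)$ when the means agree. Your identification of the capacity-achieving input for exponential noise (atom of mass $a/(P+a)$ at $0$ plus an exponential of mean $P+a$ with the complementary weight) is also right.

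The gap is the lower bound $C_P^+(E)\le C_P^+(N)$, which you explicitly defer to Verd\'u, so as a self-contained proof the proposal establishes only half the statement; moreover the route you point to is misdirected, since the representation $C=\inf_Q\sup_X \E\left[D(P_{Y|X}\|Q)\right]$ yields \emph{upper} bounds on capacity by choosing $Q$, not lower bounds. The missing idea is a short saddle-point argument using the data-processing inequality for relative entropy. Take $X^*$ to be exactly the input you constructed, so that $\E X^*=P$ and $X^*+E$ is exponential with mean $P+a$. Writing $g_N$, $g_E$ for the densities of $X^*+N$ and $X^*+E$, the same cross-entropy collapse (both against exponentials, with matching means) gives $h(X^*+N)=h(X^*+E)-D(g_N\|g_E)$ and $h(N)=h(E)-D(f_N\|f_E)$, hence
\[
I(X^*;X^*+N)=h(X^*+N)-h(N)=C_P^+(E)+D(f_N\|f_E)-D(g_N\|g_E)\;\ge\;C_P^+(E),
\]
the last inequality being data processing for relative entropy under convolution with the independent $X^*$. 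This completes the lower bound without any EPI analogue --- which, as you correctly observe via the Gamma(2) computation, is unavailable for positive random variables.
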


The first inequality says that exponential noise is the worst noise for fixed budget $P$. 
Note that $C_P^{+}(E)=\log (1+\frac{P}{a})$ has an explicit formula. 

Our main theorem implies the following for densities on the positive real line.

\begin{theorem}\label{thm:pos-main}
If $f$ is a non-increasing, log-concave density on $(0,\infty)$, and $p\in (0,2]$, then
$$
h(f) \geq  \frac{1}{p} [ \log \E(X^p) + \log (p+1)].
$$
\end{theorem}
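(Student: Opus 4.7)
The plan is to deduce this one-sided statement from the symmetric Theorem~\ref{thm:ent-pmom} by a simple symmetrization. Given $f$ on $(0,\infty)$, define
\[
\tilde f(x) = \tfrac{1}{2} f(|x|), \qquad x \in \R,
\]
and let $\tilde X$ be a random variable with density $\tilde f$. Clearly $\tilde f$ is a symmetric probability density on $\R$.

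The first step is to verify that $\tilde f$ is log-concave on all of $\R$. On each half-line $(0,\infty)$ and $(-\infty,0)$ this is immediate from the log-concavity of $f$ (together with the reflection). The only thing to check is concavity of $\log \tilde f$ across the origin. Because $f$ is non-increasing on $(0,\infty)$, the right-derivative of $\log f$ at $0^+$ is $\leq 0$; by reflection, the left-derivative of $\log \tilde f$ at $0^-$ is $\geq 0$. Since $\log\tilde f$ is continuous at $0$ and its one-sided derivatives satisfy left $\geq$ right, concavity extends across $0$, and $\tilde f$ is a symmetric log-concave density.

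Next I would compute the two quantities entering Theorem~\ref{thm:ent-pmom} in terms of $f$. A direct change of variable gives
\[
\Ent{\tilde X} = -\int_\R \tfrac{1}{2} f(|x|) \log\!\big(\tfrac{1}{2} f(|x|)\big)\,\dd x = \Ent{f} + \log 2,
\]
and similarly $\sigma_p(\tilde X)^p = \E|\tilde X|^p = \int_0^\infty f(x)\, x^p\,\dd x = \E(X^p)$. Applying Theorem~\ref{thm:ent-pmom} to $\tilde X$ therefore yields
\[
\Ent{f} + \log 2 \;\geq\; \tfrac{1}{p}\log\E(X^p) + \log 2 + \tfrac{1}{p}\log(p+1),
\]
which is the desired bound after canceling the $\log 2$ terms.

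There is no genuine obstacle: the only subtle point is the log-concavity check across the origin, which is precisely where the non-increasing hypothesis on $f$ is needed (without it, $\tilde f$ could have a strict local minimum at $0$, destroying concavity). The monotonicity assumption is therefore sharp for this symmetrization argument.
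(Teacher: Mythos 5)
Your proof is correct and is essentially identical to the paper's own argument: the paper also symmetrizes $f$ to $f_{sym}(x)=\tfrac12 f(|x|)$, computes $h(f_{sym})=h(f)+\log 2$ and $\sigma_p(f_{sym})^p=\E(X^p)$, and applies Theorem~\ref{thm:ent-pmom}. Your explicit verification of log-concavity across the origin (using the non-increasing hypothesis) is a welcome detail the paper leaves implicit.
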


\begin{proof}
Suppose $f$ is a non-increasing, log-concave density on $(0,\infty)$. Then 
$$
f_{sym}(x)=\frac{f(x)1_{x>0}+f(-x) 1_{x<0}}{2}
$$
is a symmetric, log-concave density on $\mathbb{R}$. We have
\begin{eqnarray*}\begin{split}
h(f_{sym})&=-\int_{\mathbb{R}} f_{sym}(x) \log f_{sym}(x)  \dd x \\
&= -2\int_0^\infty \frac{f(x)}{2} \log \bigg(\frac{f(x)}{2} \bigg) \dd x \\
&= h(f) + \log 2
\end{split}\end{eqnarray*}
and
\begin{eqnarray*}\begin{split}
\sigma_p(f_{sym})^p &= \int_{\mathbb{R}} |x|^p f_{sym}(x) \dd x\\
&= \frac{1}{2} \int_{-\infty}^0 (-x)^p f(-x) \dd x + \frac{1}{2} \int_0^{\infty} x^p f(x) \dd x  \\
&= \int_0^{\infty} x^p f(x) \dd x = {\red \E(X^p)}.
\end{split}\end{eqnarray*}
Since from Theorem~\ref{thm:ent-pmom},
$\Ent{f_{sym}} \geq \log \sigma_p(f_{sym})+ \log\left[2(p+1)^{1/p}\right]$,
we conclude that
$$
h(f) \geq \frac{1}{p}\log \E(X^p)+ \log\left[2(p+1)^{1/p}\right] -\log 2
= \frac{1}{p} [ \log \E(X^p) + \log (p+1)]
$$
for $p\in (0,2]$, yielding the desired inequality.
\end{proof}

Since entropy is maximized at the exponential distribution under a mean
constraint, we may rewrite the conclusion of Theorem~\ref{thm:pos-main} as
$$
D(X\|E)=h(E)-h(X) \leq \log (ea) - \frac{1}{p}[ \log \E(X^p) + \log (p+1)],
$$
when $E$ is an exponential random variable with the same mean $a$ as $X$.
When specialized to $p=1$, this reads
$$
D(X\|E)\leq \log (ea)-\log a - \log 2
= \log \big(\frac{e}{2}\big).
$$

Combining the preceding inequality with Theorem~\ref{thm:verdu} gives the following.

\begin{corollary}\label{thm:exp-gap}
If $N$ is a distribution on the positive real line with non-increasing, log-concave density, then
$C_P^{+}(N)\leq C_P^{+}(E)  + \log \big(\frac{e}{2}\big)$.
\end{corollary}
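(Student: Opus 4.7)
The plan is to simply chain together the two results that immediately precede the corollary: Theorem~\ref{thm:verdu} of Verd\'u, and the $p=1$ specialization of Theorem~\ref{thm:pos-main} that is explicitly worked out in the paragraph just above the statement.

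First, I would let $a = \E N$ and let $E$ denote an exponential random variable with mean $a$. Theorem~\ref{thm:verdu} then gives the noise-sensitivity bound
\[
C_P^{+}(N) \leq C_P^{+}(E) + D(N\|E),
\]
so the corollary reduces entirely to showing $D(N\|E) \leq \log(e/2)$ whenever $N$ has a non-increasing, log-concave density on $(0,\infty)$.

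Next, I would invoke Theorem~\ref{thm:pos-main} with $p=1$. Writing $h$ for differential entropy and using that the exponential is the maximum-entropy distribution on $(0,\infty)$ with a prescribed mean, we have $h(E) = \log(ea)$ and therefore
\[
D(N\|E) = h(E) - h(N) \leq \log(ea) - \log \E N - \log 2 = \log\bigl(e/2\bigr),
\]
using $\E N = a$ and $h(N) \geq \log \E N + \log 2$ from Theorem~\ref{thm:pos-main} at $p=1$. Combining this with the Verd\'u bound yields the claim.

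There is no real obstacle here: the content has been done in Theorem~\ref{thm:pos-main}, and the rest is bookkeeping. The only thing to double-check is that the two reference distributions match, i.e., that in Theorem~\ref{thm:verdu} the exponential reference has the same mean as $N$, and that this same exponential is the maximum-entropy reference used to pass from the moment-entropy inequality to a bound on $D(N\|E)$; both are automatic from the statements.
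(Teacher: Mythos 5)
Your proposal is correct and follows exactly the paper's own route: specialize Theorem~\ref{thm:pos-main} to $p=1$ to get $D(N\|E)\leq \log(e/2)$ for the exponential $E$ with the same mean, then apply Verd\'u's bound from Theorem~\ref{thm:verdu}. Nothing is missing.
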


Corollary~\ref{thm:exp-gap} tells us that a PANC with any decreasing, log-concave noise 
has a capacity that is at most $\log_2 (e/2)\approx 0.443$ bits per channel use 
more than the PANC with exponential noise of the same mean.

\section{Relation to the slicing problem in convex geometry}
\label{sec:slicing}

For any probability density function $f$ on $\R^n$ with covariance matrix $R$,
define its {\it isotropic constant} $L_f$ by
$$
L_f^2=\|f\|_{\infty}^{2/n} (\det(R))^{\frac{1}{n}} .
$$
The isotropic constant has a nice interpretation for uniform distributions
on convex sets $K$. If one rescales $K$ (by a linear
transformation) so that the volume of the convex set is 1 and the covariance matrix is
a multiple of the identity, then $L_K^2:=L_f^2$ is the value of the
multiple.

Observe that both $D(f)$ and $L_f$ are affine invariants. Their relationship was made explicit in \cite[Theorem V.1]{BM11:it}.

\newcommand{\vol}{\text{Vol}}
\begin{theorem}\cite{BM11:it}\label{thm:D-iso}
For any density $f$ on $\R^n$,
$$
\frac{1}{n} D(f) \leq \log [\sqrt{2\pi e} L_f ] ,
$$
with equality if and only if $f$ is the uniform density on some
set of positive, finite Lebesgue measure.
If $f$ is a log-concave density on $\R^n$, then
$$
\log \bigg[\sqrt{\frac{2\pi}{e}} L_f \bigg] \leq \frac{1}{n} D(f) ,
$$
with equality if $f$ is a product of one-dimensional 
exponential densities.
\end{theorem}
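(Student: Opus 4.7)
Starting from $D(f) = h(Z) - h(f)$, where $Z$ is Gaussian with the same covariance matrix $R$ as $f$, we have
$$D(f) = \frac{n}{2}\log(2\pi e) + \frac{1}{2}\log\det R - h(f).$$
Combining this with $\log L_f = \frac{1}{n}\log\|f\|_\infty + \frac{1}{2n}\log\det R$ shows, after cancellation, that the upper bound $\frac{1}{n}D(f) \leq \log[\sqrt{2\pi e}\,L_f]$ is equivalent to
$$h(f) \geq -\log\|f\|_\infty \qquad (\text{any density } f),$$
and the lower bound $\log[\sqrt{2\pi/e}\,L_f] \leq \frac{1}{n}D(f)$ is equivalent to
$$h(f) + \log\|f\|_\infty \leq n \qquad (\text{log-concave } f).$$

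The upper bound is immediate: $f \leq \|f\|_\infty$ pointwise gives $-\log f \geq -\log\|f\|_\infty$, which integrates against $f$ to yield the claim, with equality precisely when $f$ is constant on its support (uniform density).

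For the lower bound, set $g := \log\|f\|_\infty - \log f$, a nonnegative convex function with $f = \|f\|_\infty e^{-g}$, so that $h(f) + \log\|f\|_\infty = \int fg\,dx = \E_f[g]$. Let $V(t) := \vol(\{g \leq t\})$. By log-concavity, the sublevel sets of $g$ are convex, and Brunn--Minkowski gives that $V^{1/n}$ is concave on $[0,\infty)$. Combined with $V^{1/n}(0) \geq 0$, this forces $V(t)/t^n$ to be non-increasing in $t > 0$; write $V(t) = t^n H(t)$ with $H$ non-increasing. Pushing $e^{-g}\,dx$ forward by $g$ (equivalently, integration by parts) yields
$$\int_{\R^n} e^{-g}\,dx = \int_0^\infty V(t)e^{-t}\,dt, \qquad \int_{\R^n} g\, e^{-g}\,dx = \int_0^\infty (t-1)V(t)e^{-t}\,dt,$$
so $\E_f[g] \leq n$ is equivalent to
$$\frac{\int_0^\infty t^{n+1}H(t)e^{-t}\,dt}{\int_0^\infty t^n H(t) e^{-t}\,dt} \leq n+1.$$
Against the Gamma$(n+1,1)$ probability density $t^n e^{-t}/n!\,dt$, whose mean is $n+1$, this is exactly Chebyshev's sum inequality applied to the increasing function $t$ and the non-increasing function $H(t)$. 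For a product of one-dimensional exponentials, $g$ is linear on the positive orthant, $V(t) \propto t^n$, $H$ is constant, and Chebyshev is an equality.

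The main obstacle is spotting the reformulation in terms of the volume profile $V(t)$: once Brunn--Minkowski makes $V^{1/n}$ concave and hence $V(t)/t^n$ monotone, the argument collapses to a classical correlation inequality against the Gamma density. The upper bound of the theorem is trivial, while the reduction of the lower bound to $h(f) + \log\|f\|_\infty \leq n$ is only a direct calculation from the definitions.
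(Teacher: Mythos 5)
The paper itself does not prove Theorem~\ref{thm:D-iso}; it is quoted from \cite{BM11:it} without proof, so there is no in-paper argument to compare against. Your proof is correct and self-contained. The reduction is exactly right: after cancelling the $\frac{1}{2}\log(2\pi e)$ and $\frac{1}{2n}\log\det R$ terms, the upper bound is the monotonicity statement $h(f)\geq h_\infty(f)$ (trivial, with equality precisely for uniform densities on sets of positive finite measure), and the lower bound is the log-concave inequality $h(f)-h_\infty(f)\leq n$. Your route to the latter --- push $e^{-g}\,dx$ forward by $g=\log\|f\|_\infty-\log f$, reduce to $\int_0^\infty (t-(n+1))\,t^nH(t)e^{-t}\,dt\leq 0$ with $H(t)=V(t)/t^n$ non-increasing, and finish with Chebyshev's correlation inequality against the Gamma$(n+1,1)$ density --- is a clean and valid argument; amusingly, the final step is the same one-sign-change trick the paper uses in Lemma~\ref{lm:rev-Hensley}. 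The equality check for product exponentials ($V(t)=t^n/n!$, $H$ constant) is also correct.

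Two small points of precision. First, convexity of the individual sublevel sets (quasi-convexity of $g$) is not by itself enough to make $V^{1/n}$ concave; you need the inclusion $\lambda\{g\leq s\}+(1-\lambda)\{g\leq t\}\subseteq\{g\leq \lambda s+(1-\lambda)t\}$, which requires convexity of $g$ itself before Brunn--Minkowski applies. This is of course available here, but say so. Second, if $\inf g$ is not attained (e.g.\ the one-sided exponential), $K_0$ may be empty; the argument survives because you only need concavity of $V^{1/n}$ on $(0,\infty)$ together with $V^{1/n}(0^+)\geq 0$ to conclude that $V(t)/t^n$ is non-increasing. Neither point affects the validity of the proof.
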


Since $D(f)\geq 0$, Theorem~\ref{thm:D-iso} immediately yields $\sqrt{2\pi e} L_f \geq 1$,
which is the optimal dimension-free lower bound on isotropic constants.
On the other hand, the problem of whether the isotropic constant
is bounded from above by a universal constant for the class of uniform distributions on symmetric convex bodies, 
which was first raised by Bourgain \cite{Bou86} in 1986 (see also \cite{Bal88, MP89}),
remains open. 

\begin{conjecture}\cite{Bou86}\label{conj:hyp1}[{\sc Slicing Problem or Hyperplane Conjecture}] 
 There exists a universal, positive constant $c$ (not depending on $n$) 
such that for any symmetric convex set $K$ of unit volume in $\mathbb{R}^n$,  there exists a hyperplane $H$
such that the $(n-1)$-dimensional volume of the section
$K\cap H$ is bounded below by $c$. 
\end{conjecture}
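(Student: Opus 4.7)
This is the famous Hyperplane Conjecture of Bourgain, which has resisted proof for decades and is still open at the time of writing, so rather than pretend to sketch a proof I will outline the information-theoretic strategy naturally suggested by the framework of this paper. The starting point is the equivalence, visible already in Theorem~\ref{thm:D-iso}, between a uniform upper bound on the isotropic constant $L_K$ and a uniform upper bound on $\frac{1}{n} D(f)$ for all log-concave densities $f$ on $\R^n$; thus the conjecture is equivalent to showing that an isotropic log-concave density is, in normalized relative entropy from Gaussianity, within a universal constant of the standard Gaussian $\gamma_n$, uniformly in dimension.

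The first step would be to reduce to isotropic log-concave densities, which costs nothing since both $L_f$ and $D(f)$ are affine invariants. For such an $f$ one can express the gap $h(\gamma_n) - h(f)$ via de Bruijn's identity as a time integral of Fisher information deficits along the heat (or Ornstein--Uhlenbeck) semigroup. The plan is then to apply a stochastic localization scheme of Eldan--Klartag type: write $f$ as a martingale of tilted log-concave densities that progressively concentrate onto near-Gaussians with random mean, and control $D(\cdot)$ along the flow by tracking the evolution of the covariance process. This reduces the desired estimate to a thin-shell or Poincar\'e-type variance bound for isotropic log-concave measures, in the spirit of the Kannan--Lov\'asz--Simonovits program.

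The main obstacle, and the reason for the longevity of the conjecture, is precisely the dimensional loss that accumulates at every step: existing realizations of this strategy give only $L_K\lesssim n^{1/4}$ (up to logarithms), and shedding the remaining polynomial factor requires sharp spectral control along the localization. In the framework of this paper, Theorem~\ref{thm:ent-pmom} resolves the one-dimensional case sharply, but a direct tensorization is unavailable: in high dimension, the extremal log-concave measure under a covariance constraint is neither the uniform measure on a Euclidean ball nor on a product of intervals, so the one-dimensional computation of the entropy gap does not lift. The hard part is thus not the reduction scheme but the dimension-free control of the underlying spectral or thin-shell quantity, for which genuinely new tools beyond those developed here seem to be needed.
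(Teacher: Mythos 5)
This statement is an open conjecture in the paper (cited from Bourgain), and the paper offers no proof of it; you are right to decline to prove it, and your survey of the entropic reformulation via Theorem~\ref{thm:D-iso}, the affine invariance of $L_f$ and $D(f)$, and Klartag's $n^{1/4}$ bound matches the paper's own discussion in Section~\ref{sec:slicing}. Your added remarks on why the sharp one-dimensional result (Theorem~\ref{thm:ent-pmom}) does not tensorize are accurate and consistent with the authors' stated motivation.
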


 The slicing problem has spurred a large literature, a synthesis of which may be found in the book \cite{BGVV14:book}.
 For our purposes, we note that there are several equivalent formulations of the conjecture, all of a geometric or functional
analytic flavor. Motivated by a seminal result of Hensley \cite{Hen80} (cf. \cite{MP89})
that 
$c_1 \leq L_K \vol_{n-1}(K\cap H) \leq c_2$,
for any isotropic convex body $K$ in $\R^n$ and any hyperplane $H$ passing through
its barycenter (with $c_2>c_1>0$ being universal constants), it can be shown that
the hyperplane conjecture is equivalent to the statement that
the isotropic constant of a symmetric convex body in $\R^n$ is bounded from above by a universal constant (independent of $n$).
Furthermore, it turns out that the conjecture is also equivalent to the statement that
the isotropic constant of a symmetric log-concave density in $\R^n$ is bounded 
from above by a universal constant independent of dimension.
Moreover, the assumption of central symmetry may be removed from
the conjecture if it is true \cite{MP99}, but we focus on symmetric bodies and densities in this note.

Using this formulation in terms of isotropic constants and Theorem~\ref{thm:D-iso}, \cite{BM11:it} proposed
the following ``entropic form of the hyperplane conjecture'': 
For any symmetric log-concave density $f$ on $\R^n$ and some universal
constant $c$,
$\frac{D(f)}{n}\leq c$.
Thus the conjecture is a statement about the (dimension-free) closeness of an 
arbitrary symmetric log-concave measure to a Gaussian measure.

Existing partial results on the slicing problem already give insight
into the closeness of log-concave measures to Gaussian measures. 
While there are a string of earlier results (see, e.g., \cite{Bou91, Dar95, Pao00}),
the current best bound, obtained by Klartag \cite{Kla06} (cf., \cite{LV17:focs}),
asserts that $L_K \leq cn^{1/4}$.  Using a transference result of Ball \cite{Bal88} from
convex bodies to log-concave functions, the same bound is seen to also apply 
to $L_f$, for a general log-concave density $f$.
Combining this with Theorem~\ref{thm:D-iso}  leads immediately to the conclusion
that for  any log-concave density $f$ on $\R^n$,
$D(f)\leq \frac{1}{4} n\log n + cn$,
for some universal constant $c>0$. 

The original motivation for our exploration of Theorem~\ref{thm:ent-pmom} actually arose
from the hyperplane conjecture: our hope was to understand the extremizers 
(for the formulations in terms of relative entropy and the isotropic constant) 
in low dimensions as a source of intuition. Corollary~\ref{cor:iso} speaks to this
question in dimension 1 for the class of symmetric log-concave densities (of course,
in dimension 1, the geometric question for convex sets is trivial since there is only one
convex set up to scaling in $\R$). 
Indeed, Corollary~\ref{cor:iso} implies for any  symmetric, log-concave density $f$ on $\R$, 
$L_f \leq  \frac{e}{\sqrt{12}}$.
Since the uniform is not an extremizer for the 
upper bound on $L_f$  in terms of $D(f)$ (though the symmetrized exponential is),
this bound is not sharp. Nonetheless, let us observe that a sharp bound on the isotropic constant
in dimension 1 is actually implied by Lemma~\ref{lm:rev-Hensley} below.
Indeed, Lemma~\ref{lm:rev-Hensley} (or
the equivalent Proposition~\ref{prop:rev-Hensley}) implies that
in the class of symmetric, log-concave densities on $\R$, $L_f\leq \frac{\Gamma(3)}{4}=\frac{1}{2}$,
with equality if and only if $f$ is a symmetrized exponential density.
It is interesting to note that, already in dimension 1, the extremizers for the isotropic constant formulation of the slicing problem
are different from those for the relative entropy formulation of it.

As briefly mentioned earlier, the questions discussed in this section are of interest both with and without the central symmetry assumption.
Our main result does, in fact, provide a bound even in the non-symmetric case,
thanks to the observation of \cite{BM13:goetze} that $\mathcal{N}(X-Y)\leq e^2 \mathcal{N}(X)$ 
if $X, Y$ are i.i.d. with a log-concave distribution on $\R^n$.
(The constant, which is not sharp, is conjectured in \cite{MK18} to be 4 and to be achieved by the product distribution 
whose 1-dimensional marginals are the exponential distribution.)
This immediately implies, from the fact that $X-Y$ has a symmetric, log-concave distribution, that
$$
\mathcal{N}(X)\geq \frac{\mathcal{N}(X-Y)}{e^2} \geq \frac{6}{\pi e^3} \Var(X-Y) = \frac{12}{\pi e^3} \Var(X) \approx 0.19 \Var(X) .
$$
However, this bound is significantly inferior to \cite[Theorem 3]{MK18b}, which shows that $N(X)\geq 4\Var(X)$.
When translated to bounds on $D(Y)$, this bound of Marsiglietti and Kostina \cite{MK18b} reads 
as $D(Y)\leq \frac{1}{2}\log \big(\frac{\pi e}{2}\big)$ for any log-concave density on $\R$ (not necessarily symmetric). 
While this bound improves on an earlier bound of $\frac{1}{2}\log(\pi e)$ obtained by \cite{BM11:it}, it remains
suboptimal for the class of log-concave distributions. We believe that the optimal bound
on $D(Y)$ for log-concave random variables $Y$ that are not necessarily symmetric should be
$\frac{1}{2}\log \big(\frac{2\pi}{e}\big)$, which is achieved for the exponential distribution with density $e^{-x}$ supported on the positive real line,
but we have been unable to prove this so far.



We have the following sharp relation between moments and the maximum value of a symmetric, log-concave function on the real line. 

\begin{lemma}\label{lm:rev-Hensley}
For every even log-concave function $f:\R\to [0,+\infty)$, we have
\[
f(0)^p\int |x|^pf(x) \dd x \leq 2^{-p}\Gamma(p+1)\left(\int f(x) \dd x\right)^{p+1}.
\]
Equality holds if and only if $f(x)=ce^{-C|x|}$ for some positive constants $c, C$.
\end{lemma}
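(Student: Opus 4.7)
The plan is to reduce the inequality, by a scaling/normalization, to a direct single-crossing comparison between $f$ and the symmetrized exponential with matching value at $0$ and matching integral. Both sides of the inequality are homogeneous of degree $p+1$ in $f$ and also under $x \mapsto \alpha x$, so using evenness I may restrict to the half-line and assume $f(0) = 1$ and $\int_0^\infty f = a$. The inequality to prove becomes
\[
\int_0^\infty x^p f(x) \, \dd x \leq \Gamma(p+1)\, a^{p+1}.
\]
The natural reference is $H(x) = e^{-x/a}$, which satisfies $H(0) = 1$, $\int_0^\infty H = a$, and $\int_0^\infty x^p H(x) \, \dd x = \Gamma(p+1)\, a^{p+1}$. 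Thus the claim is equivalent to $\int_0^\infty x^p(f-H) \, \dd x \leq 0$, under the side condition $\int_0^\infty (f-H) \, \dd x = 0$.

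The key structural observation I would establish next is that $f - H$ changes sign exactly once on $(0, \infty)$: it is nonnegative on some initial interval $(0, x_0)$ and nonpositive on $(x_0, \infty)$. Writing $\varphi := -\log f$ and $\psi(x) := x/a$, both vanish at $0$ and $\varphi$ is convex on its domain. Since $\varphi$ is convex with $\varphi(0) = 0$, the slope $x \mapsto \varphi(x)/x$ is nondecreasing, so $\varphi - \psi$ changes sign at most once on $(0, \infty)$, and when it does, it goes from negative to positive. The matched integrals rule out the degenerate configurations where the sign is never switched, forcing a unique crossing point $x_0 \in (0,\infty)$ at which $f - H = e^{-\varphi} - e^{-\psi}$ switches from positive to negative.

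With the single-crossing structure in hand, the conclusion is immediate from the standard Karamata-type trick: on $(0, x_0)$ one has $f - H \geq 0$ and $x^p \leq x_0^p$, while on $(x_0, \infty)$ one has $f - H \leq 0$ and $x^p \geq x_0^p$. In both regions,
\[
x^p \bigl(f(x) - H(x)\bigr) \leq x_0^p \bigl(f(x) - H(x)\bigr),
\]
so integrating and using the integral balance gives
\[
\int_0^\infty x^p (f-H) \, \dd x \leq x_0^p \int_0^\infty (f-H) \, \dd x = 0,
\]
which is the desired inequality. Tracing equality cases through this chain forces $f \equiv H$ on $[0,\infty)$, which after undoing the normalizations gives $f(x) = c e^{-C|x|}$.

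I expect the only delicate step to be the correct verification of the single-crossing claim in cases where $f$ is compactly supported (so $\varphi = +\infty$ past the support) or where $\varphi$ has a linear piece matching $\psi$ on an interval; both situations are routinely handled because the monotonicity of $\varphi(x)/x$ on the support of $f$ and the matched integrals force precisely one switch from $f > H$ to $f < H$. The rest of the proof is a clean three-line computation.
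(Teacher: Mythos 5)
Your proposal is correct and follows essentially the same route as the paper's own proof: normalize $f(0)=1$, compare with the exponential having the same integral, use log-concavity to get a single sign change at some $x_0$, and conclude via $\int |x|^p(f-g) = \int(|x|^p - |x_0|^p)(f-g) \le 0$. You supply more detail on the single-crossing step (via monotonicity of $\varphi(x)/x$) than the paper does, but the argument is the same.
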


\begin{proof}
By homogeneity we can assume that $f(0) = 1$. Consider $g(x) = e^{-a|x|}$ such that $\int g = \int f$. By log-concavity, 
there is exactly one sign change point $x_0$ for $f - g$. We have
\[
\int |x|^p[f(x) - g(x)] = \int [|x|^p-|x_0|^p][f(x) - g(x)] \leq 0 ,
\]
since the integrand is nonpositive. It remains to verify the lemma for $g$, which holds with equality.
\end{proof}

The inequality in the lemma is not new; indeed, it follows from classical and more general reverse H\"older inequalities
independently discovered by Ball \cite[Lemma 4]{Bal88} and Milman-Pajor \cite[Lemma 2.6]{MP89} (see also \cite{BFLM17}).
Moreover the idea of the proof involving sign changes has also found use in recent investigation of moment sequences
of symmetric, log-concave densities \cite{ENT18:2}.  

Observe that since $f(0)=\max_x f(x)=\|f\|_\infty$ for a symmetric, log-concave density $f$, Lemma~\ref{lm:rev-Hensley} 
may be rewritten using the language of R\'enyi entropy.

\begin{proposition}\label{prop:rev-Hensley}
If $X$ has a symmetric, log-concave density $f$ on $\R$, we have
\begin{equation}\label{eq:rev-h-ent}
h_\infty(X)\geq \log \sigma_p(X)+ \frac{1}{p} \log \bigg[ \frac{2^p}{\Gamma(p+1)} \bigg] ,
\end{equation}
with equality if and only if $X$ has a symmetrized exponential distribution, i.e., $f(x)=\frac{c}{2} e^{-c|x|}$ for some $c>0$. 
\end{proposition}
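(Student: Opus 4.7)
The proposition is essentially a rewriting of Lemma~\ref{lm:rev-Hensley} in information-theoretic language, so my plan is to unwind definitions and apply the lemma directly. First, I would recall that the R\'enyi entropy of order $\infty$ of a random variable $X$ with density $f$ is
\[
h_\infty(X) = -\log \|f\|_\infty.
\]
Since $f$ is symmetric and log-concave on $\R$, $f$ is even and unimodal, so it attains its maximum at the origin and $\|f\|_\infty = f(0)$. Thus the inequality to be proved becomes, after exponentiating and rearranging,
\[
f(0)^p \int_{\R} |x|^p f(x)\,\dd x \leq \frac{\Gamma(p+1)}{2^p}.
\]

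Next, I would apply Lemma~\ref{lm:rev-Hensley} directly to $f$. Since $f$ is a probability density, $\int f = 1$, and the lemma gives precisely
\[
f(0)^p \int_{\R} |x|^p f(x)\,\dd x \leq 2^{-p}\Gamma(p+1),
\]
which is the required inequality. Taking logarithms and dividing by $p$ yields
\[
\log f(0) + \log \sigma_p(X) \leq \frac{1}{p}\log \Gamma(p+1) - \log 2,
\]
and upon isolating $-\log f(0) = h_\infty(X)$ one recovers \eqref{eq:rev-h-ent}.

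For the equality case, the lemma says equality holds if and only if $f(x) = c e^{-C|x|}$ for some positive constants $c, C$. Imposing the normalization $\int f = 1$ forces $c = C/2$, so that the extremizers are precisely the symmetrized exponential densities $f(x) = \tfrac{c}{2} e^{-c|x|}$ with $c > 0$, as claimed. There is no real obstacle here beyond verifying that the normalization is compatible with the equality cases, since the hard work has already been done in Lemma~\ref{lm:rev-Hensley}.
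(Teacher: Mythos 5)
Your proposal is correct and is exactly the paper's route: the paper presents the proposition as a direct rewriting of Lemma~\ref{lm:rev-Hensley} using the facts that $\|f\|_\infty = f(0)$ for a symmetric log-concave density and that $h_\infty(X) = -\log\|f\|_\infty$, with the normalization $\int f = 1$ pinning down the equality case. Nothing is missing.
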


If we tried to use Proposition~\ref{prop:rev-Hensley} to get a bound on entropy using the fact that $h(X)\geq h_\infty(X)$,
it would not be sharp since the former inequality is sharp only for  symmetrized exponentials,
and the latter is sharp only for uniforms. Consequently we need a different technique to prove
Theorem~\ref{thm:ent-pmom}. The approach we use in the next section utilizes the concavity property of the Shannon entropy $h$, 
which does not hold for $h_\infty$.


\section{Proof of Theorem \ref{thm:ent-pmom}}
\label{sec:proof-ent-vs-var}

Let $\mathcal{F}$ be the set of all even log-concave probability density functions on $\R$. Define for $f \in \mathcal{F}$ the following functionals: entropy,
\[
\Ent{f} = -\int f \log f,
\]
and $p$-th moment,
\[
\sigma_p(f) = \left(\int |x|^pf(x)\dd x\right)^{1/p}.
\]
Our goal is to show that
\[
\inf_\mathcal{F} \Big\{\Ent{f} - \log\left[\sigma_p(f)\right]\Big\} = \log\left[2(p+1)^{1/p}\right].
\]

\subsection*{Reduction}

\subsubsection*{Bounded support}

First we argue that it only suffices to consider compactly supported densities. Let $\mathcal{F}_L$ be the set of all densities from $\mathcal{F}$ which are supported in the interval $[-L,L]$. Given $f \in \mathcal{F}$, by considering $f_L = \frac{f\1_{[-L,L]}}{\int_{-L}^L f}$, which is in $\mathcal{F}_L$, and checking that $\Ent{f_L}$ and $\sigma_p(f_L)$ tend to $\Ent{f}$ and $\sigma_p(f)$, we get
\[
\inf_\mathcal{F} \Big\{\Ent{f} - \log\left[\sigma_p(f)\right]\Big\} = 
\inf_{L > 0} \inf_{\mathcal{F}_L} \Big\{\Ent{f} - \log\left[\sigma_p(f)\right]\Big\}.
\]
This last infimum can be further rewritten as
\[
\inf_{\alpha, L > 0} \Big(\inf \left\{\Ent{f}, \ f \in \mathcal{F}_L, \sigma_p(f) = \alpha\right\}  - \log \alpha\Big).
\]
Consequently, to prove Theorem \ref{thm:ent-pmom}, it suffices to show that for every $\alpha, L > 0$, we have
\[
\inf \left\{\Ent{f}, \ f \in \mathcal{F}_L, \sigma_p(f) = \alpha\right\} \geq \log\alpha + \log\left[2(p+1)^{1/p}\right].
\]

\subsubsection*{Degrees of freedom}

We shall argue that the last infimum is attained at densities $f$ which on $[0, \infty)$ are first constant and then decrease exponentially. Fix positive numbers $\alpha$ and $L$ and consider the set of densities $A = \{f \in \mathcal{F}_L, \sigma_p(f) = \alpha\}$. 

\bigskip
\noindent\underline{Step I.} We show that $M=\sup_{f \in A} -\Ent{f}$ is finite and attained at a point from the set $A$. To see the finiteness we observe that by Lemma \ref{lm:rev-Hensley} for every $f \in A$ we get
\[
	-h(f) = \int f \log f \leq \log(\| f\|_\infty) = {\red \log f(0)} \leq \frac1p \log\left( \frac{2^{-p} \Gamma(p+1)}{\alpha^p} \right).
\]
In order to show that the supremum is attained on $A$, we need the following lemma.

\begin{lemma}\label{lm:subseq}
Let $(f_n)_{n \geq 1}$ be a sequence of functions in $A$. Then there exists a subsequence $(f_{n_k})_{k \geq 1}$ converging pointwise to a function $f$ in $A$.
\end{lemma}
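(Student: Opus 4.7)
The plan is to rely on compactness properties of convex functions. I parametrize each $f_n = e^{-V_n}$, where $V_n : [-L,L] \to \R \cup \{+\infty\}$ is an even convex function (set $V_n = +\infty$ outside $[-L,L]$). The strategy is to extract a subsequence along which $V_n$ converges to a convex function $V$ on $(-L,L)$, then take $f = e^{-V} \mathbf{1}_{[-L,L]}$ and verify that $f \in A$.

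First, I establish two-point a priori bounds on $V_n$. Lemma~\ref{lm:rev-Hensley} yields $f_n(0) = \|f_n\|_\infty \le \frac{1}{2\alpha}\Gamma(p+1)^{1/p} =: C$, hence $V_n(0) \ge -\log C$. Since $f_n$ is non-increasing on $[0,L]$ with total mass $1/2$ there, I also get $f_n(0) \ge 1/(2L)$, so $V_n(0) \le \log(2L)$. Fixing any $x_1 \in (0,\min(L,1/(4C)))$, the bound $\int_0^{x_1} f_n \le x_1 C \le 1/4$ forces $\int_{x_1}^L f_n \ge 1/4$, and monotonicity on $[x_1,L]$ gives $f_n(x_1) \ge 1/(4L)$, i.e., $V_n(x_1) \le \log(4L)$. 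Thus $V_n(0)$ and $V_n(x_1)$ are uniformly bounded above, while $V_n \ge -\log C$ throughout.

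Second, by a diagonal extraction I pass to a subsequence (still denoted $f_n$) such that $V_n(q)$ converges in $[-\log C, +\infty]$ for every rational $q \in [-L,L]$. By convexity combined with the two-point upper bound, the $V_n$ are uniformly Lipschitz on every compact subinterval of the interior of the set where the limit along rationals remains finite; a standard Arzel\`a--Ascoli-type argument for convex functions upgrades dense-set convergence to pointwise (indeed locally uniform) convergence of $V_n$ to an even convex function $V$ on $(-L,L)$ taking values in $\R \cup \{+\infty\}$. Setting $f(x) := e^{-V(x)} \mathbf{1}_{[-L,L]}(x)$, I obtain $f_n \to f$ pointwise on $[-L,L]$ away from the (at most two) boundary points of the support of $f$, hence almost everywhere; redefining $f$ at those at most two points does not alter its equivalence class as a symmetric log-concave density.

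Third, I verify $f \in A$. Evenness and log-concavity of $f$ are inherited from the corresponding properties of $V$. Since $0 \le f_n \le C\,\mathbf{1}_{[-L,L]}$, dominated convergence gives $\int f\,\dd x = \lim_n \int f_n\,\dd x = 1$ and $\int |x|^p f(x)\,\dd x = \lim_n \int |x|^p f_n(x)\,\dd x = \alpha^p$, so $f \in \mathcal{F}_L$ and $\sigma_p(f) = \alpha$. The main obstacle is the possible blow-up of $V_n$ near $\pm L$, which prevents uniform control up to the endpoints; this is overcome by the two-point bound in the interior together with the fact that, because all densities share the compact support $[-L,L]$ and are uniformly bounded by $C$, the behaviour near the boundary is invisible to both the normalization and moment constraints via dominated convergence.
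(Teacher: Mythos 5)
Your proof is correct and follows essentially the same route as the paper's: a diagonal extraction over the rationals using the uniform bound $\|f_n\|_\infty \le C$ supplied by Lemma~\ref{lm:rev-Hensley}, an upgrade to pointwise convergence off the (at most two) endpoints of the limiting support via log-concavity, and dominated convergence to verify membership in $A$. The only difference is presentational: you carry out the upgrade step through locally uniform Lipschitz bounds for the convex potentials $V_n$ (plus Arzel\`a--Ascoli), whereas the paper squeezes $f_n(x)$ between $f_n(q_1)$ and $f_n(q_2)$ for nearby rationals directly.
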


\begin{proof}
As noted above, the functions from $A$ are uniformly bounded (by Lemma \ref{lm:rev-Hensley}) and thus, using a standard diagonal argument, by passing to a subsequence, we can assume that $f_n(q)$ converges for every rational $q$ (in $[-L,L]$), say to $f(q)$. Notice that $f$ is log-concave on the rationals, that is $f(\lambda q_1 + (1-\lambda) q_2) \geq f(q_1)^{\lambda} f(q_2)^{1-\lambda}$, for all rationals $q_1, q_2$ and $\lambda \in [0,1]$ such that $\lambda q_1 + (1-\lambda)q_2$ is also a rational. Moreover, $f$ is even and nonincreasing on $[0,L]$. Let $L_0 = \inf\{q > 0, q \text{ is rational}, f(q) = 0\}$. If $x > L_0$, then pick any rational $L_0< q < x$ and observe that $f_n(x) \leq f_n(q) \to f(q) = 0$, so $f_n(x) \to 0$. The function $f$ is continuous on $[0,L_0)$. If $0 < x < L_0$, consider rationals $q_1, q_2, r$ such that $q_1 < x < q_2 < r < L_0$. Then, by monotonicity and log-concavity,
\[
1 \leq \frac{f(q_1)}{f(q_2)} \leq \left[\frac{f(q_1)}{f(r)}\right]^{\frac{q_2-q_1}{r-q_1}} \leq \left[\frac{f(0)}{f(r)}\right]^{\frac{q_2-q_1}{r-q_1}} \leq \left[\frac{f(0)}{f(r)}\right]^{\frac{q_2-q_1}{r-x}},
\]
thus $\lim_{q_1\to x-} f(q_1) = \lim_{q_2\to x+} f(q_2)$ (these limits exist by the monotonicity of $f$). Now for any $\varepsilon > 0$, take rationals $q_1$ and $q_2$ such that $q_1 < x < q_2$ and $f(q_1)-f(q_2) < \varepsilon$. Since, $f_n(q_2) \leq f_n(x) \leq f_n(q_1)$, we get 
\[
f(q_2) \leq \liminf f_n(x) \leq \limsup f_n(x) \leq f(q_1).
\]
therefore $\limsup f_n(x) - \liminf f_n(x) \leq \varepsilon$. Thus, $f_n(x)$ is convergent, to say $f(x)$. We also set, say $f(L_0) = 0$. Then $f_n$ converges to $f$ at all but two points $\pm L_0$, the function $f$ is even and log-concave. By Lebesgue's dominated convergence theorem, $f \in A$.
\end{proof}

Suppose that $(f_{n})_{n \geq 1}$ is a sequence of elements of $A$  such that $-h(f_n) \to M$. By the lemma, $f_{n_k} \to f$ for some subsequence $(n_k)$ and $f \in A$. By the Lebesgue dominated convergence theorem, $\int f_{n_k}\log f_{n_k} \to \int f\log f$, so $-h(f) = M$. 

\bigskip
\noindent\underline{Step II.} We shall show that $M$ is attained at some extremal point of $A$. Recall that $f \in A$ is called extremal if it is not possible to write $f$ as a combination $f=\lambda f_1 + (1-\lambda)f_2$, where $\lambda \in (0,1)$ and {\red distinct} $f_1, f_2 \in A$. 

Indeed, suppose $f$ is not an extremal point of $A$. Then there exist $\lambda \in (0,1)$ and $f_1, f_2 \in A$ with {\red $f_1 \neq f_2$} such that $f=\lambda f_1 + (1-\lambda)f_2$. Since the entropy functional $-h(f)=\int f\log f$ is strictly convex, we get
\[
	-h(f) = -h(\lambda f_1 + (1-\lambda)f_2) < \lambda (-h(f_1)) + (1-\lambda)(-h(f_2)) \leq M.
\] 
Thus, $-h(f) \ne M$. Since $M$ is attained on $A$, it has to be attained at some extremal point of $A$. 

\bigskip
\noindent\underline{Step III.} Every extremal point in $A$ has at most $2$ degrees of freedom.

Recall the notion of degrees of freedom of log-concave functions introduced in \cite{FG06}, {\red adapted here to even functions}. The degree of freedom of a log-concave {\red even} function $g: \R \to [0,\infty)$ is the largest integer $k$ such that there exist $\delta > 0$ and linearly independent continuous {\red even} functions $h_1,\ldots,h_k$ defined on $\{x \in \R, \ g(x) > 0\}$ such that for every $(\varepsilon_1,\ldots,\varepsilon_k) \in [-\delta,\delta]^k$, the function $g + \sum_{i=1}^k \varepsilon_ih_i$ is log-concave. {\red Let us also notice that an even function is log-concave if and only if its restriction to $[0,\infty)$ is log-concave and non-increasing. Thus in the above definition we could alternatively demand that $g + \sum_{i=1}^k \varepsilon_i h_i$ is log-concave and non-increasing on $[0,\infty)$. Therefore in Step IV below we only consider the restrictions of our functions $g$ to $[0,\infty)$ and consider $h_i$ defined only on this set. Then $h_i$ can be defined on $(-\infty,0)$ via $h_i(x)=h_i(-x)$. }

Suppose $f \in A$ has more than two degrees of freedom. Then there are continuous functions $h_1, h_2, h_3$ (supported in $[-L,L]$) and $\delta > 0$ such that for all $\varepsilon_1, \varepsilon_2, \varepsilon_3 \in [-\delta,\delta]$ the function $f+\varepsilon_1 h_1 + \varepsilon_2 h_2 + \varepsilon_3 h_3$ is log-concave (note that these function are not necessarily contained in $A$). The space of solutions $\varepsilon_1, \varepsilon_2, \varepsilon_3$ to the system of equations
\begin{align*}
	\varepsilon_1 \int h_1 + \varepsilon_2 \int h_2 + \varepsilon_3 \int h_3 & = 0 \\
	\varepsilon_1 \int |x|^ph_1 + \varepsilon_2 \int |x|^ph_2 + \varepsilon_3 \int |x|^ph_3 & = 0
\end{align*}
is of dimension at least $1$. Therefore this space intersected with the cube $[-\delta,\delta]^3$ contains a symmetric interval and, in particular, two antipodal points $(\eta_1, \eta_2, \eta_3)$ and $-(\eta_1, \eta_2, \eta_3)$. Take $f_+= f+ \eta_1 h_1 + \eta_2 h_2 + \eta_3 h_3$ and $f_-= {\red f}- \eta_1 h_1 - \eta_2 h_2 - \eta_3 h_3$, which are both in $A$. Then, $f=\frac12(\tilde f_+ + \tilde f_-)$ and therefore $f$ is not an extremal point.

\bigskip
\noindent\underline{Step IV.} Densities with at most $2$ degrees of freedom are \emph{simple}.

We want to determine all nonincreasing log-concave functions $f$ on $[0,\infty)$ with degree of freedom at most $2$. Suppose $x_1<x_2< \ldots< x_n$ are points of differentiability of the potential $V=-\log {\red f}$, such that $0<V'(x_1)<V'(x_2) < \ldots < V'(x_n)$. Define 
\[
V_i(x) = \left\{ 
\begin{array}{ll}
	V(x), & x < x_i \\
	V(x_i)+(x-x_i)V'(x_i), & x \geq x_i.
\end{array} \right. 
\]
We claim that $e^{-V}(1+ \delta_0 + \sum_{i=1}^n \delta_i V_i )$ is a log-concave non-increasing function for $|\delta_i| \leq \varepsilon$, with $\varepsilon$ sufficiently small. To prove log-concavity we observe that on each interval the function is of the form $e^{-V(x)}(1+\tau_1+ \tau_2 x + \tau_3 V(x))$. On the interval $[0,x_1]$ it is of the form $e^{-V}(1+ {\red \tau_1 + \tau_2 V})$. Log-concavity follows from Lemma 1 in \cite{FG06}. We also have to ensure that the density is nonincreasing. On $[0,x_1]$ it follows from the fact that 
\[
	V'-(\log(1+\tau V))'= V'\cdot\left( 1 - \frac{\tau }{1+\tau V} \right) \geq 0
\]
for small $\tau$. On the other intervals we have similar expressions
\[
	V'(x) - \frac{\tau_2 + \tau_3 V'(x)}{1+\tau_1+ \tau_2 x + \tau_3 V(x)} > 0,
\]
which follows from the fact that $V'(x) > \alpha$ for some $\alpha > 0$.

From this it follows that if there are points $x_1<x_2< \ldots< x_n$, such that $0<V'(x_1)<V'(x_2) < \ldots < V'(x_n)$, then $e^{-V}$ has degree of freedom $n+1$. It follows that the only function with degree of freedom at most $2$ is of the form
\[
V(x) = \left\{ 
\begin{array}{ll}
	\beta, & x < a \\
	\beta+\gamma(x-a), & x \in [a,a+b].
\end{array} \right. 
\]

\subsection*{A two-point inequality}

It remains to show that for every density $f$ of the form
\[
f(x) = c\1_{[0,a]}(|x|) + ce^{-\gamma(|x|-a)}\1_{[a,a+b]}(|x|),
\]
where $c$ is a positive normalising constant and $a, b$ and $\gamma$ are nonnegative, we have
\[
\Ent{f} - \log \sigma_p(f) \geq \log\left[2(p+1)^{1/p}\right]
\]
with equality if and only if $f$ is uniform. If either $b$ or $\gamma$ are zero, then $f$ is a uniform density and we directly check that there is equality. Therefore let us from now on assume that both $b$ and $\gamma$ are positive and we shall prove the strict inequality.
Since the left-hand side does not change when $f$ is replaced by $x \mapsto \lambda f(\lambda x)$ for any positive $\lambda$, we shall assume that $\gamma = 1$. Then the condition $\int f = 1$ is equivalent to $2c(a + 1 - e^{-b})=1$. We have
\begin{align*}
\Ent{f} &= -2ac\log c - 2\int_0^b ce^{-x}\log(ce^{-x}) \dd x \\
&= -2c(a+1-e^{-b})\log c + 2c(1-(1+b)e^{-b}) \\
&= -\log c + \frac{1-(1+b)e^{-b}}{a+1-e^{-b}}.
\end{align*}
Moreover, 
\[
\sigma_p^p(f) = 2c\left(\frac{a^{p+1}}{p+1} + \int_0^b(x+a)^pe^{-x}\dd x\right).
\]
Putting these together yields
\begin{align*}
\Ent{f} - \log \sigma_p(f) - \log\left[2(p+1)^{1/p}\right] &= \frac{1-(1+b)e^{-b}}{a+1-e^{-b}} {\red +} \frac{p+1}{p}\log(a+1-e^{-b}) \\
&\quad- \frac{1}{p}\log\left[a^{p+1} + {\red (p+1)}\int_0^b(x+a)^pe^{-x}\dd x\right].
\end{align*}
Therefore, the proof of Theorem \ref{thm:ent-pmom} is complete once we show the following two-point inequality.

%

\begin{lemma}\label{lm:2point}
For nonnegative $s$, positive $t$ and $p \in (0,2]$ we have
\begin{align*}
\log\Big[s^{p+1} &+ (p+1)\int_0^t (s+x)^pe^{-x}\dd x\Big] < (p+1)\log[s+1-e^{-t}] + p\frac{1-(1+t)e^{-t}}{s+1-e^{-t}}.
\end{align*}
\end{lemma}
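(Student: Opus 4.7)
The plan is to analyze
$F(s,t) := (p+1)\log B + pC/B - \log A$
and prove $F > 0$ on $\{s \geq 0,\ t > 0\}$ for $p \in (0,2]$, with $B,C,A$ as in the lemma. A useful first move is the probabilistic reformulation: integration by parts gives $A = \E[(s+Y)^{p+1}]$ where $Y = \min(E,t)$ for $E$ a standard exponential; under this identification $B = s + \E Y$ and $C = \frac{1}{2}\E Y^2$. Setting $W = (s+Y)/B$ so that $\E W = 1$, the inequality becomes the asymmetric reverse-Jensen bound $\log \E W^{p+1} < pC/B$, i.e., a concentration-type estimate for the specific random variable $Y$ (a truncated exponential with an atom at $t$).

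I would then verify the boundary and asymptotic behavior. At $t=0$, $Y \equiv 0$ and $F(s,0) = 0$; direct differentiation gives $\partial_t F(s,0) = 0$ and $\partial_t^2 F(s,0) = p/s > 0$, so $F > 0$ for $t$ near zero. As $s \to \infty$, a Taylor expansion yields $A/B^{p+1} = 1 + O(s^{-2})$ while $e^{pC/B} = 1 + pC/s + O(s^{-2})$, so $F \sim pC(t)/s > 0$. Thus strict positivity holds at both ``edges'' $t = 0$ and $s = \infty$.

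The main step is establishing the inequality throughout the interior. Numerical experimentation indicates that $F$ is not monotone in either $s$ or $t$ individually, so one cannot simply rely on univariate monotonicity. My approach would be to Taylor-expand $(1+V)^{p+1}$ to second order with remainder, where $V = W - 1$ so that $\E V = 0$; the constraint $p+1 \leq 3$ is what keeps this remainder manageable. Combined with $\log x \leq x - 1$ (strict unless $x = 1$) and the algebraic identity $pC/B = pB\,\E V^2/2 + p\mu_Y^2/(2B)$ (writing $\mu_Y = \E Y$ and using $\E Y^2 = B^2 \E V^2 + \mu_Y^2$), one aims to bound $\log \E W^{p+1}$ by this sum. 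The $p\mu_Y^2/(2B)$ term is critical: it absorbs the slackness of the quadratic Taylor bound in the regime where $V$ is not close to zero.

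The main obstacle is making the Taylor remainder estimate sharp enough to yield strict inequality uniformly in $(s,t)$, while correctly accounting for the atom of $Y$ at $t$ (which dominates higher moments for moderate $t$). The constraint $p \leq 2$ is genuinely essential: at $s = 0$, $t \to \infty$ the lemma reduces to $\Gamma(p+2) < e^p$, a bound that fails for $p > 2$. Hence any complete proof must exploit $p \leq 2$ honestly, most likely through a case split (e.g., $p \leq 1$ versus $p \in (1,2]$, where $(1+v)^{p-1}$ has qualitatively different monotonicity in $v$) or by directly leveraging the explicit Laplace-transform structure of $Y = \min(E,t)$.
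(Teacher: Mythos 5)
Your setup is sound and matches the paper's first move: the integration by parts identifying the left-hand side as $\log \E\big[(s+Y)^{p+1}\big]$ with $Y=\min(E,t)$ a truncated exponential (atom $e^{-t}$ at $t$), together with the identities $B=s+\E Y$ and $C=\tfrac12 \E Y^2$, is all correct. But what follows is a plan, not a proof: the entire content of the lemma is concentrated in your ``main step,'' and you yourself flag that making the second-order Taylor remainder for $(1+V)^{p+1}$ sharp enough, uniformly in $(s,t)$ and with the atom of $Y$ accounted for, is an unresolved obstacle. No bound on that remainder is exhibited, no case split is carried out, and the combination with $\log x\le x-1$ is not shown to close. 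As it stands, the argument does not establish the inequality for any interior $(s,t)$. There is also a factual slip: at $s=0$, $t\to\infty$ the inequality degenerates to $\log\Gamma(p+2)\le p$, which holds up to $p\approx 2.615$, not merely $p\le 2$; so this limit does not show that $p\le 2$ is ``genuinely essential,'' and your stated reason for why the hypothesis $p\le 2$ must enter is not correct.

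The idea you are missing, and which makes the paper's proof work, is to exploit the dependence on $p$ rather than on $(s,t)$: with $s,t$ fixed, $p\mapsto\log\int_0^t (s+x)^{p+1}\,\dd\mu(x)$ is strictly convex (H\"older applied to the measure $\mu$), while the right-hand side is affine in $p$. Hence it suffices to verify the inequality at the endpoints $p=0$ (where it is an identity) and $p=2$ (where the integral is explicit and, after exponentiating and using $e^x>1+x+\tfrac12x^2+\tfrac16x^3$, the claim reduces to the nonnegativity of three explicit functions of $t$ checked by repeated differentiation). This is exactly where $p\le 2$ enters, and it replaces the uniform-in-$(s,t)$ remainder estimate you could not supply with a single two-variable computation. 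If you want to salvage your route, you would need to either prove the convexity-in-$p$ reduction yourself or produce an explicit, verified bound on $\E\big[(1+V)^{p+1}\big]-1-\tfrac{(p+1)p}{2}\E V^2$ valid for all admissible $V$; without one of these the proposal has a genuine gap.
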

\begin{proof}
Integrating by parts, we can rewrite the left hand side as $\log[\int_0^t(s+x)^{p+1}\dd \mu(x)]$ for a Borel measure $\mu$ on $[0,t]$ (which is absolutely continuous on $(0,t)$ with density $e^{-x}$ and has the atom $\mu(\{t\}) = e^{-t}$). With $s$ and $t$ fixed, this is a strictly convex function of $p$ (by H\"older's inequality). The right hand side is linear as a function of $p$. Therefore, it suffices to check the inequality for $p=0$ and $p=2$. For $p=0$ the inequality becomes equality. For $p=2$, after computing the integral and exponentiating both sides, the inequality becomes
\[
s^3+3(1-e^{-t})s^2+6(1-(1+t)e^{-t})s + 3e^{-t}(2e^t-t^2-2t-2) < a^3e^{2\frac{b}{a}},
\]
where we put $a = s+1-e^{-t}$ and $b = 1-(1+t)e^{-t}$, which are positive. We lower-bound the right hand side using the estimate $e^{x} > 1+x+\frac12x^2+\frac16x^3$, $x > 0$, by
\begin{align*}
a^3\left(1 + 2\frac{b}{a} + 2\frac{b^2}{a^2} + \frac{4}{3}\frac{b^3}{a^3}\right) &= a^3 + 2a^2b + 2ab^2 + \frac{4}{3}b^3.
\end{align*}
Therefore it suffices to show that
\[
s^3+3(1-e^{-t})s^2+6(1-(1+t)e^{-t})s + 3e^{-t}(2e^t-t^2-2t-2) \leq a^3 + 2ba^2 + 2b^2a + \frac{4}{3}b^3.
\]
After moving everything on one side, plugging in $a$, $b$, expanding and simplifying, it becomes
\begin{align*}
2e^{-t}u(t)\cdot s^2 + e^{-2t}v(t)\cdot s + \frac{1}{3}e^{-3t}w(t) \geq 0,
\end{align*}
where
\begin{align*}
u(t) &= e^t - 1 -t,\\
v(t) &= 3e^{2t}-2te^t-12e^t+2t^2+8t+9,\\
w(t) &= e^{3t} + 3e^{2t}(3t^2-4t-13) + 3e^t(6t^2+20t+19) -4t^3-18t^2-30t-19.
\end{align*}
It suffices to prove that these functions are nonnegative for $t\geq 0$. This is clear for $u$. For $v$, we check that $v(0) = v'(0) = v''(0) = 0$ and
\[
\frac12 e^{-t}v'''(t) = 12e^t-t-9 \geq 12(t+1) - t - 9 = 11t + 3 \geq 3.
\]
For $w$, we check that $w'(0) = w''(0) = w'''(0) = 0$, $w^{(4)}(0) = 18$ and 
\begin{align*}
\frac13 e^{-t}w^{(5)}(t) &= 81e^{2t} + 32e^t(3t^2+11t-8) + 80t+6t^2+239 \\
&\geq 81e^{2t} - 8\cdot 32e^{t} + 239 = 81\left(e^t-\frac{128}{81}\right)^2 + \frac{2975}{81} \geq \frac{2975}{81}.
\end{align*}
It follows that $v(t)$ and $w(t)$ are nonnegative for $t \geq 0$.
\end{proof}

\begin{remark}\label{rem:neccond}
If we put $s = 0$ and $t \to \infty$ in the inequality from Lemma \ref{lm:2point}, we get $\log\Gamma(p+2) \leq p$ (in particular $p < 2.615$). We suspect that this necessary condition is also sufficient for the inequality to hold for all positive $s$ and $t$.
\end{remark}

%

\section{R\'enyi entropy minimizers}
\label{sec:renyi}

\subsection{A R\'enyi extension of Theorem~\ref{thm:ent-pmom}}
\label{ss:renyi-ext}

For $q\in(0,1)\cup(1,\infty)$, the {\it R\'enyi entropy of order $q$} of a probability density $f$ on $\R$ is defined as:
$$
h_q(f)=\frac{1}{1-q}\log\bigg(\int_{\R} f^q(x)\dd x\bigg).
$$
For $q=0, 1, \infty$, the entropies $h_q(f)$ are defined in a limiting sense. Thus
$$
h_1(f)=h(f)=-\int_{\R} f(x)\log f(x) \dd x 
$$
is the Shannon differential entropy; the R\'enyi entropy of order 0 is
$$
h_0(f)=\log|\text{supp}(f)| ,
$$
where $\text{supp}(f)$ is the support of $f$, defined as the closure of the set $\{x: f(x)>0\}$
and $|A|$ represents the Lebesgue measure of the subset $A$ of $\R$;
and the R\'enyi entropy of order $\infty$ is
$$
h_{\infty}(f)=-\log\|f\|_{\infty},
$$
where $\|f\|_{\infty}$ is the essential supremum of $f$ with respect to Lebesgue measure on $\R$.
It is an easy consequence of H\"older's inequality that the R\'enyi entropies of a fixed density $f$
are monotonically decreasing in the order: $h_q(f) \geq h_r(f)$ if $0\leq q\leq r\leq \infty$.
Moreover, if the density $f$ is log-concave, $\|f\|_{\infty}$ is just the maximum value of $f$ by continuity properties of convex functions;
also, the R\'enyi entropies of $f$ of all orders are necessarily finite, and can be bounded in terms of each other \cite{FMW16, FLM20}.

We have the following extension of Theorem~\ref{thm:ent-pmom} to R\'enyi entropies of orders between 0 and 1.

\begin{theorem}\label{thm:renyi-pmom}
Let $X$ be a symmetric log-concave random variable and $p \in (0,2]$. Then, for any $q\in [0,1]$,
\[
h_q(X) \geq \log \sigma_p(X) + \log\left[2(p+1)^{1/p}\right],
\]
with equality if and only if $X$ is uniformly distributed on a symmetric interval. Moreover, by taking the limit as $p\downarrow 0$,
\[
h_q(X) \geq \E (\log |X|) + \log (2e).
\]
\end{theorem}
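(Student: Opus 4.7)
The plan is to leverage the monotonicity of R\'enyi entropy in its order, combined with Theorem~\ref{thm:ent-pmom}, to deduce both parts essentially immediately. As noted earlier in Section~\ref{sec:renyi}, one has $h_q(f) \geq h_r(f)$ whenever $0 \leq q \leq r \leq \infty$. Applied at $q \leq 1$ and $r = 1$, this gives $h_q(X) \geq h(X)$, which chained with Theorem~\ref{thm:ent-pmom} immediately yields the desired bound $h_q(X) \geq \log \sigma_p(X) + \log[2(p+1)^{1/p}]$ for $p \in (0,2]$ and $q \in [0,1]$.

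For the equality case, if $X$ is uniform on a symmetric interval $[-L,L]$, then all R\'enyi entropies coincide and equal $\log(2L)$, while the direct computation $\sigma_p(X) = L/(p+1)^{1/p}$ shows the right-hand side is also $\log(2L)$, confirming equality. Conversely, equality in the chain $h_q(X) \geq h(X) \geq \log \sigma_p(X) + \log[2(p+1)^{1/p}]$ forces equality in both links: the second equality pins down $X$ as uniform on a symmetric interval by the sharp case of Theorem~\ref{thm:ent-pmom}, and one readily checks this is consistent with the first.

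For the limiting inequality I would simply let $p \downarrow 0$ in the already-proved bound. The constant transforms cleanly via $\log[2(p+1)^{1/p}] = \log 2 + p^{-1}\log(1+p) \to \log(2e)$. For the moment term, one needs $\log \sigma_p(X) = p^{-1}\log \E |X|^p \to \E \log|X|$, which is the classical convergence of $L^p$-means to the geometric mean and follows by dominated convergence applied to $p^{-1}(|X|^p - 1) \to \log|X|$ pointwise. The main (rather mild) obstacle is ensuring $\E |\log|X||$ is finite so that the limit is meaningful, but this is automatic for symmetric log-concave $X$: the density $f$ is bounded (giving integrability of $\log|X|$ in a neighborhood of $0$) and has exponential tails (giving all polynomial moments, which control $(\log|X|)_+$ on $\{|X|\geq 1\}$).
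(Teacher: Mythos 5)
Your proposal is correct and follows essentially the same route as the paper, whose entire proof is the one-line observation that monotonicity of R\'enyi entropy in the order reduces the claim to Theorem~\ref{thm:ent-pmom}, with the uniform case checked directly. You simply supply the details the paper leaves implicit, notably the $p\downarrow 0$ limit and the integrability of $\log|X|$.
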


\begin{proof}
The strict inequality holds for non-uniform measures by monotonicity of R\'enyi entropies in the order, and 
it is easily checked that equality holds for the uniform.
\end{proof}

Thus, in Theorem~\ref{thm:ent-pmom}, one can replace Shannon entropy by R\'enyi entropy of any order 
$q$ in $[0,1]$ and the same statement holds true. In fact, one can also use Theorem~\ref{thm:renyi-pmom} 
to get bounds on R\'enyi entropies of order greater than 1. In order to do this, we use the sharp R\'enyi 
entropy comparison result implicit in \cite{FMW16} and explicitly discussed in \cite{MW20}
(see also \cite[Corollary 7.1]{FLM20}), which states that if $f$ is a log-concave density in $\mathbb{R}^n$, then for $p\geq q>0$,
$$
h_q(f)-h_p(f)\leq n\frac{\log q}{q-1}-n\frac{\log p}{p-1},
$$
with equality achieved for the product density whose one-dimensional marginals are the symmetrized exponential distribution.
Consequently we may write, for $q>1, p\in (0,2]$, and in our setting of a random variable $X$ with a symmetric, log-concave distribution on $\R$, 
$$
h_q(X) 
\geq \log \sigma_p(X)+ \log\left[2(p+1)^{1/p}\right] -\frac{\log q}{q-1}.
$$
While this does provide a bound on arbitrary R\'enyi entropies in terms of moments (which is new to the best of our knowledge), we
emphasize that it is not sharp when $q>1$.

It is instructive to compare Theorem~\ref{thm:renyi-pmom} with results of Lutwak, Yang and Zhang \cite{LYZ05:1} on 
maximizing R\'enyi entropies subject to moment constraints
(the $p=2$ case was independently discovered by \cite{CHV03} and the $q=1$ case is classical, see, e.g., \cite{CT06:book}).
They showed that if $p$ and $\E |X|^p$ are fixed  positive numbers, 
and if 
\begin{equation}\label{eq:lyz-cond}
q>\frac{1}{1+p} \quad \text{(or equivalently } p> \frac{1}{q}-1 ), 
\end{equation}
then $h_q(X)$ is maximized by a scaling of a ``generalized standard Gaussian density'' of the form
$$
g_{p,q}(x)= 
\begin{cases}
A_{p,q}^{-1} \bigg(1+\frac{\beta}{p}|x|^p \bigg)^{-\frac{1}{1-q}} , &\text{ if } q< 1\\
A_{p,1}^{-1} \exp\{-\frac{|x|^p}{p}\}, &\text{ if } q= 1 .\\ 
\end{cases}
$$
Here, 
$$
\beta=
\frac{q}{1-q}-\frac{1}{p} 
$$
is well defined when $q<1$ (and always negative because of the assumed relationship \eqref{eq:lyz-cond}), 
and $A_{p,q}$ is a normalizing constant given by
\begin{equation}\label{eq:A}
A_{p,q}=
\frac{A_{p,1}}{\beta^{1/p}}\cdot \frac{\Gamma(\frac{1}{1-q}-\frac{1}{p})} {\Gamma(\frac{1}{1-q})}
\end{equation}
when $q<1$, 
and $A_{p,1}= 2p^{1/p} \Gamma(1+\frac{1}{p})$,
with $\Gamma(x):=\int_0^\infty t^{x-1}e^{-t} dt$ 
as usual denoting the Gamma 
function.

Define the {\it R\'enyi entropy power of order $q$} of $X$ by
$$
N_q(X)= \frac{1}{A_{2,q}^{2}e}  e^{2h_q(X)} .
$$
This normalization has not been used in the literature before, but we use it since it simplifies our expressions while being consistent with the usual
entropy power in the sense that $N_1(X)=N(X)$.
For a random variable $Z_{p,q}$ drawn from the density $g_{p,q}$, it turns out (see, e.g., \cite{LLYZ13} for a sketch of the computation) that $\sigma_p(Z_{p,q})=\E|Z_{p,q}|^p=1$ and
the maximum entropy power for random variables with $p$-th moment equal to 1 is given by
$$
N_q(Z_{p,q})= 
\begin{cases}
\frac{1}{e} \big(\frac{A_{p,q}}{A_{2,q}}  \big(1+\frac{\beta}{p}\big)^{\frac{1}{1-q}} \big)^{2}, &\text{ if } q< 1\\
\frac{1}{e} \big(\frac{A_{p,1}}{A_{2,1}} e^\frac{1}{p} \big)^{2}, &\text{ if } q= 1 .\\ 
\end{cases}
$$
Thus one has the following upper bound  for the R\'enyi entropy power of a random variable $X$
when $p\in (0,2]$ and $q\in (\frac{1}{1+p},1]$:
\[
N_q(X)\leq N_q(Z_{p,q}) \sigma_p(X)^2 .
\]
Note that the maximizers of R\'enyi entropy
(which are scaled versions of $Z_{p,q}$) are not always log-concave; for example, when $q=1$, it is easy to see from the formula above
that they are log-concave precisely when $p\geq 1$.

This may be compared to Theorem~\ref{thm:renyi-pmom}, which may be written in the form
$$
N_q(X)\geq \frac{4}{A_{2,q}^{2} e} [(p+1) ]^{\frac{2}{p}} \sigma_p(X)^2 ,
$$
when $p\in (0,2], q\in (0,1]$, and $X$ is symmetric and log-concave. In particular, for $p=2$,
we obtain that for any $q\in (1/3,1)$, we have the following sandwich bound when $X$ is symmetric and log-concave:
\begin{equation}\label{eq:Nq-sw}
\frac{12}{A_{2,q}^{2} e}\leq \frac{N_q(X)}{\Var(X)}\leq  \frac{1}{e} \bigg(1+\frac{\beta}{2}\bigg)^{\frac{2}{1-q}} .
\end{equation}

\subsection{Implication for relative $q$-entropy}

As we did in Section~\ref{sec:slicing} for the case of $q=1$, it is possible to express Theorem~\ref{thm:renyi-pmom}
as a bound on a kind of distance between a symmetric, log-concave distribution and the generalized Gaussian with the same
$p$-th moment. In order to do this, we need to define the notion of relative $q$-entropy, whose properties were first systematically studied
by Ashok Kumar and Sundaresan \cite{AS15:1}. The {\it relative $q$-entropy} between densities $f$ and $u$ is defined as
$$
I_{\red q}(f\|u)=\frac{q}{1-q}\log \int \frac{f}{\|f\|_q} \bigg(\frac{u}{\|u\|_q}\bigg)^{q-1} ,
$$
when $q\in (0,1)\cup (1,\infty)$; as pointed out in  \cite{AS15:1}, the relative $q$-entropy is genuinely a notion of distance
between densities rather than between probability measures since it may depend on the reference measure being used.
There is a way to write the  relative $q$-entropy in terms of more familiar notions of distance. 
Define the {\it R\'enyi divergence of order $\alpha$} between densities $f$ and $g$
by
$$
D_\alpha(f\|u)=\frac{1}{\alpha-1}\log \int f^\alpha u^{1-\alpha} 
$$
for $\alpha\in (0,1)\cup(1,\infty)$; by taking limits, it is clear that $D_1(f\|g)$ should be defined as the usual relative entropy $D(f\|g)$.
Also define the {\it $\alpha$-escort density} of a density $f$ by
$$
f_\alpha(x)=\frac{f^\alpha(x)}{\int f^\alpha} .
$$
Then $I_q(f\|u)=D_{1/q}(f_q\|u_q)$ (see \cite[Lemma 2]{AS15:1}), which also makes clear that $I_1(f\|u)=D_1(f\|u)=D(f\|u)$.

The following proposition is a particular example of general facts about relative $q$-entropy projections onto linear families
of probability measures that were proved in  \cite{AS15:1}.

\begin{proposition}\label{prop:as}\cite[Corollary 13]{AS15:1}
Suppose $q\in (0,1]$, and let $\mathcal{P}$ be the family of probability measures such that
the mean of the function $T:\R\rightarrow\R$ under them is fixed at a particular value $t$. Let 
the random variable $X$ have a distribution from $\mathcal{P}$, and let $Z$ be a random variable that maximizes 
the R\'enyi entropy of order $q$ over $\mathcal{P}$. Then
$$
I_q(X\|Z)=h_q(Z)-h_q(X).
$$
\end{proposition}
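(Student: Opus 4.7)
The plan is to follow the standard maximum-entropy variational argument adapted to the R\'enyi setting. First I would characterize the maximizer $Z$ via Lagrange multipliers: maximizing $h_q(f) = \frac{1}{1-q}\log \int f^q$ subject to $\int f = 1$ and $\int T f = t$, and differentiating the Lagrangian, yields the stationarity condition
\[
\frac{q}{1-q} \cdot \frac{g^{q-1}(x)}{\int g^q} \,=\, \mu + \lambda T(x)
\]
on the support of $g$, where $g$ denotes the density of $Z$. In other words, there exist constants $\alpha, \beta \in \R$ such that $g(x)^{q-1} = \alpha + \beta T(x)$ for $x$ in the support of $g$.

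The second step is the affine-test trick. Since both $X$ and $Z$ lie in $\mathcal{P}$, they integrate $1$ and $T$ to the same values, so $\int f(\alpha + \beta T) = \int g(\alpha + \beta T) = \alpha + \beta t$. On the other hand, $\int g \cdot g^{q-1} = \int g^q = \|g\|_q^q$, so $\alpha + \beta t = \|g\|_q^q$, and therefore
\[
\int f \, g^{q-1} \,=\, \alpha + \beta t \,=\, \|g\|_q^q.
\]
(Strictly, $g^{q-1}$ is only defined on $\{g>0\}$; I would need to check that $f$ is absolutely continuous with respect to $g$ on $\mathcal{P}$ for the integral to make sense, which is the standard caveat for any maximum-entropy projection.)

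The third step is to plug this into the definition of relative $q$-entropy:
\[
I_q(X\|Z) \,=\, \frac{q}{1-q}\log \int \frac{f}{\|f\|_q}\left(\frac{g}{\|g\|_q}\right)^{q-1}
\,=\, \frac{q}{1-q}\log \frac{\int f\, g^{q-1}}{\|f\|_q \, \|g\|_q^{q-1}}
\,=\, \frac{q}{1-q} \log \frac{\|g\|_q}{\|f\|_q}.
\]
Since $h_q(f) = \frac{1}{1-q}\log \int f^q = \frac{q}{1-q}\log \|f\|_q$, and likewise for $g$, this last quantity equals $h_q(Z) - h_q(X)$, which is the claimed identity. The limiting case $q=1$ reduces to the classical relation $D(X\|Z) = h(Z) - h(X)$, and can be recovered by continuity or by running the same argument with Shannon entropy directly.

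The main obstacle I would expect is purely a matter of hypotheses rather than mechanics: one must ensure the R\'enyi maximizer $Z$ exists, that the Lagrangian argument is rigorous (so one must verify a suitable regularity of $\mathcal{P}$, e.g., a relative interior condition), and that the support of $f$ is contained in that of $g$ so that $g^{q-1}$ is integrated where it is well defined. Assuming these technicalities, as is done in \cite{AS15:1}, the Pythagorean identity drops out from the affine form of $g^{q-1}$ together with the single-moment constraint defining $\mathcal{P}$.
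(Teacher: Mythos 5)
Your argument is correct, and it is essentially the same ``affine stationarity plus moment matching'' computation that underlies Corollary 13 of \cite{AS15:1}; the paper itself does not reprove the proposition but simply cites that reference. The two points you flag as caveats (existence of the maximizer and absolute continuity of $X$ with respect to $Z$, so that $\int f\,g^{q-1}$ is finite) are exactly the hypotheses under which the cited Pythagorean identity is established, and in the paper's application they hold because the maximizer for $T(x)=|x|^p$ is the generalized Gaussian $g_{p,q}$, for which $g_{p,q}^{q-1}$ is indeed affine in $|x|^p$ on its support.
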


There continues to be a relation between the two sides of the identity when $q>1$ but the equality is replaced by an inequality
in this case \cite{AS15:1}; we do not, however, use that observation in this note since we only consider $q\leq 1$.

Clearly, combining Proposition~\ref{prop:as} with Theorem~\ref{thm:renyi-pmom} allows us to write the latter
as a bound on the relative $q$-entropy from a generalized Gaussian density.

\begin{corollary}\label{cor:Iq}
Let  $X$ be a random variable with a symmetric, log-concave distribution. Then, for $p\in (0,2]$ and $q\in (\frac{1}{1+p},1)$,
and $Z$ being the multiple of $Z_{p,q}$ that has the same $p$-th moment as $X$, we have
$$
I_q(X\|Z)\leq \log \bigg[A_{p,q}\frac{(1+\frac{\beta}{p})^{\frac{1}{1-q}}}{2 (p+1)^\frac{1}{p}} \bigg] ,
$$
with equality if and only if $X$ is uniformly distributed on a symmetric interval.
\end{corollary}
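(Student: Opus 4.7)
The plan is to combine the two earlier ingredients identified immediately before the corollary: the maximum-R\'enyi-entropy characterization from Proposition~\ref{prop:as} and the minimum-R\'enyi-entropy bound from Theorem~\ref{thm:renyi-pmom}. The point is that both $X$ and $Z$ lie in the linear family
\[
\mathcal{P}=\{\text{densities } f \text{ on } \R : \textstyle\int |x|^p f(x)\,\mathrm{d}x = \sigma_p(X)^p\},
\]
and $Z$, being the appropriate rescaling of $Z_{p,q}$, is precisely the maximizer of $h_q$ over $\mathcal{P}$ (this is the Lutwak--Yang--Zhang extremal property in the range $q\in(\tfrac{1}{1+p},1)$). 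Consequently Proposition~\ref{prop:as}, applied with $T(x)=|x|^p$ and $t=\sigma_p(X)^p$, gives the identity
\[
I_q(X\|Z)=h_q(Z)-h_q(X).
\]

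To turn this into the stated estimate, I would simply plug in closed-form expressions for the two terms. For $h_q(X)$, Theorem~\ref{thm:renyi-pmom} (which is valid for $q\in[0,1]$ and $p\in(0,2]$) supplies the lower bound $h_q(X)\geq\log\sigma_p(X)+\log[2(p+1)^{1/p}]$, with equality iff $X$ is uniform on a symmetric interval. For $h_q(Z)$, I would use the scaling property of R\'enyi entropy together with the explicit formula for $N_q(Z_{p,q})$ recorded just before the statement: since $\sigma_p(Z_{p,q})=1$, we have $Z=\sigma_p(X)\,Z_{p,q}$ and therefore $h_q(Z)=\log\sigma_p(X)+h_q(Z_{p,q})$; and from
\[
N_q(Z_{p,q})=\frac{1}{e}\Bigl(\tfrac{A_{p,q}}{A_{2,q}}(1+\tfrac{\beta}{p})^{\frac{1}{1-q}}\Bigr)^{2},\qquad N_q(\cdot)=\frac{1}{A_{2,q}^{2}e}e^{2h_q(\cdot)},
\]
a short calculation yields $h_q(Z_{p,q})=\log\bigl[A_{p,q}(1+\tfrac{\beta}{p})^{\frac{1}{1-q}}\bigr]$.

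Subtracting, the $\log\sigma_p(X)$ terms cancel and one is left with the right-hand side of the corollary. The equality clause is inherited directly from the corresponding clause in Theorem~\ref{thm:renyi-pmom}, since Proposition~\ref{prop:as} is an equality (not an inequality) for $q\leq 1$, so the only source of slack in the chain is the bound on $h_q(X)$. No step here looks like an obstacle: the only thing one really needs to verify is the elementary R\'enyi-entropy scaling identity and the bookkeeping that the normalizing constant $A_{p,q}$ in the paper's definition of $Z_{p,q}$ matches the closed form used for $N_q(Z_{p,q})$; both are immediate from the formula \eqref{eq:A}.
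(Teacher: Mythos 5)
Your proposal is correct and follows exactly the route the paper intends (the paper's entire justification is the one sentence preceding the corollary: combine Proposition~\ref{prop:as}, applied to the linear family with $T(x)=|x|^p$, with Theorem~\ref{thm:renyi-pmom}). Your added bookkeeping — the scaling identity $h_q(Z)=\log\sigma_p(X)+h_q(Z_{p,q})$ and the extraction of $h_q(Z_{p,q})=\log\bigl[A_{p,q}(1+\tfrac{\beta}{p})^{\frac{1}{1-q}}\bigr]$ from the formula for $N_q(Z_{p,q})$ — checks out and correctly cancels the $\log\sigma_p(X)$ terms, with the equality case inherited from Theorem~\ref{thm:renyi-pmom} as you say.
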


\subsection{Reverse entropy power inequalities}

The entropy power inequality {\red (see \eqref{eq:EPI})} has spawned a large literature, both in mathematics
due to its fundamental connections to geometric functional inequalities, and in engineering
due to its many applications in quantifying the fundamental limits of various communication systems. 
Some recent refinements of the entropy power inequality may be found, e.g., in \cite{MB06:isit, MB07, MG19, MNT20}.

One may formally strengthen it by using the
invariance of entropy under affine transformations of determinant $\pm 1$, 
i.e., $ \mathcal{N} (AX) =  \mathcal{N} (X)$ whenever $|{\rm det}(A)|=1$. Specifically,
\begin{equation}\label{epi-aff}
\inf_{A_1, A_2}  \mathcal{N} (A_1 X+A_2 Y) \geq  \mathcal{N}(X) +  \mathcal{N}(Y),
\end{equation}
where the matrices $A_1$ and $A_2$ range over $SL(n,\R)$, i.e., over 
entropy-preserving linear transformations. 
It was shown by \cite{BM11:cras} that {\red inequality \eqref{epi-aff}}
can be reversed with a constant independent of dimension 
if we restrict to log-concave distributions. More precisely,
there exists a universal constant $C$ 
such that if $X$ and $Y$ are independent random vectors 
in $\R^n$ with log-concave densities, 
\[\label{eq:repi}
\inf_{A_1, A_2}  \mathcal{N} (A_1 X+A_2 Y)\, \leq\, C\, \big[ \mathcal{N} (X) +  \mathcal{N} (Y)\big],
\]
where $A_1$ and $A_2$ range over $SL(n,\R)$.
This reverse entropy power inequality is analogous
to Milman's \cite{Mil86} reverse Brunn-Minkowski inequality, 
which is a celebrated result in convex geometry.
Thus the reverse entropy power inequality of \cite{BM11:cras} (and  its extension to larger classes 
of ``$s$-concave measures'' in \cite{BM12:jfa}) can be seen
as an extension of the analogies between geometry and information theory
(discussed, for example, in \cite{CC84, DCT91, Gar02, MMX17:0, FMMZ18}).  

The universal constant in the reverse entropy power inequality of \cite{BM11:cras} is not explicit. 
However, explicit constants are known when further assumptions of symmetry are made. 
For example, Cover and Zhang \cite{CZ94} (cf., \cite{MK18}) showed that 
if $X$ and $Y$ are (possibly dependent) random vectors in
$\R^n$, with the same
log-concave marginal density, then $h(X+Y)\leq h(2X)$.
In particular, for i.i.d. random vectors $X, X'$ with a log-concave distribution, the 
reverse entropy power inequality holds with 
both linear transformations being the identity, and with a universal constant of 2:
$\mathcal{N} (X+X')\leq  \mathcal{N} (2X)= 4 \mathcal{N} (X)=2[ \mathcal{N} (X)+ \mathcal{N} (X')]$.
%
%
%
%

There has been much recent interest in developing lower bounds for the R\'enyi entropies of
convolutions, which may be thought of as ``R\'enyi entropy power inequalities''. While the growing
literature on the subject is surveyed in \cite{MMX17:0}, the only orders for which sharp inequalities
are known are $q=0$ (which corresponds to the Brunn-Minkowski inequality),
$q=1$ (which corresponds to the original Shannon-Stam entropy power inequality),
and $q=\infty$ (which corresponds to
generalizations of Rogozin's inequality for convolution that were only developed recently \cite{MMX17:1}).

While suboptimal forms of R\'enyi entropy power inequalities that hold for general densities are known for $q\in (1,\infty)$ (see, e.g., \cite{Li18:1}), the only
known inequalities for $q\in (0,1)$ were recently obtained in \cite{MM19, LMM20} under the assumption
that the densities being convolved are log-concave (or more generally, $s$-concave). 

Our results imply a reverse R\'enyi entropy power inequality for orders $q\in (\frac{1}{3},1]$.

\begin{corollary}\label{cor:repi}
Let  $X, Y$ be uncorrelated random variables with symmetric, log-concave distributions. Then
$$
\mathcal{N}(X+Y)\leq \frac{\pi e}{6} [\mathcal{N}(X)+\mathcal{N}(Y)] .
$$
Furthermore, if $q\in (\frac{1}{3},1)$, then
$$
\mathcal{N}_q(X+Y)\leq  \frac{A_{2,q}^{2}}{12} \bigg(1+\frac{\beta}{2}\bigg)^{\frac{2}{1-q}} [\mathcal{N}_q(X)+\mathcal{N}_q(Y)] ,
$$
where {\red the constant $A_{2,q}$ is defined in \eqref{eq:A}}.
\end{corollary}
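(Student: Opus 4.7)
The plan is to combine the sandwich bound \eqref{eq:Nq-sw} with the additivity of variance for uncorrelated random variables. The key observation that makes the argument clean is that the upper half of \eqref{eq:Nq-sw} does not actually require the target random variable to be log-concave or symmetric: it comes from the Lutwak--Yang--Zhang R\'enyi-entropy maximization under a second-moment constraint (and, in the limiting $q=1$ case, reduces to the classical Gaussian maximum-entropy property $\mathcal{N}(Z)\leq\Var(Z)$). So the upper half applies to $Z=X+Y$ without any need to verify that the sum remains log-concave, while the lower half, which does require symmetric log-concavity, is invoked only on the marginals $X$ and $Y$.

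For the Shannon case, the desired inequality follows from the chain
\[
\mathcal{N}(X+Y) \,\leq\, \Var(X+Y) \,=\, \Var(X)+\Var(Y) \,\leq\, \frac{\pi e}{6}\bigl[\mathcal{N}(X)+\mathcal{N}(Y)\bigr],
\]
where the middle equality uses only uncorrelatedness and the outer inequalities are the two halves of the sandwich stated after Theorem~\ref{thm:ent-pmom} in the $p=2$ case. The R\'enyi case $q\in(1/3,1)$ proceeds identically: applying the LYZ upper bound to $X+Y$, then variance additivity, then the lower half of \eqref{eq:Nq-sw} (i.e. Theorem~\ref{thm:renyi-pmom} with $p=2$) to each of $X$ and $Y$ gives
\[
N_q(X+Y) \,\leq\, \frac{1}{e}\Bigl(1+\tfrac{\beta}{2}\Bigr)^{\!\frac{2}{1-q}}\bigl[\Var(X)+\Var(Y)\bigr] \,\leq\, \frac{A_{2,q}^{2}}{12}\Bigl(1+\tfrac{\beta}{2}\Bigr)^{\!\frac{2}{1-q}}\bigl[N_q(X)+N_q(Y)\bigr],
\]
which is precisely the second claimed inequality.

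There is no serious obstacle here; the entire content of the corollary is packaged inside the sharp sandwich \eqref{eq:Nq-sw}. The only thing to keep straight is the asymmetric role of its two halves: the lower half requires the symmetric log-concavity hypothesis and is used only on the marginals, while the upper half is universal among densities of finite variance and is used on the sum, so the argument bypasses any need to check whether $X+Y$ inherits log-concavity or even a density from $X$ and $Y$ (in degenerate cases where $X+Y$ has no density, the left-hand sides simply vanish and the inequalities hold trivially).
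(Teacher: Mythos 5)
Your proposal is correct and follows essentially the same chain as the paper: the upper half of the sandwich applied to $X+Y$, additivity of variance under uncorrelatedness, and the lower half (Theorem~\ref{thm:renyi-pmom} with $p=2$) applied to each marginal. Your explicit remark that the upper half needs no log-concavity of $X+Y$ is a worthwhile clarification the paper leaves implicit, since uncorrelated (non-independent) log-concave summands need not have a log-concave sum.
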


\begin{proof}
We now observe that Theorem~\ref{thm:ent-pmom} 
easily gives us an explicit reverse entropy power inequality for one-dimensional 
symmetric log-concave random variables. Indeed, 
$$
\mathcal{N}(X+Y)\leq \Var(X+Y)
= \Var(X)+\Var(Y)
\leq  \frac{\pi e}{6} [\mathcal{N}(X)+\mathcal{N}(Y)] ,
$$
as long as $X$ and $Y$ are uncorrelated.

Using the inequality \eqref{eq:Nq-sw}, we write
\begin{equation*}\begin{split}
\mathcal{N}_q(X+Y)&\leq \frac{1}{e} \bigg(1+\frac{\beta}{2}\bigg)^{\frac{2}{1-q}} \Var(X+Y)\\
&= \frac{1}{e} \bigg(1+\frac{\beta}{2}\bigg)^{\frac{2}{1-q}} \big[\Var(X)+\Var(Y)\big]\\
&\leq \frac{1}{e} \bigg(1+\frac{\beta}{2}\bigg)^{\frac{2}{1-q}}  \frac{A_{2,q}^{2} e}{12} [\mathcal{N}_q(X)+\mathcal{N}_q(Y)] ,
\end{split}\end{equation*}
as long as $X$ and $Y$ are uncorrelated.
\end{proof}

Under the additional assumption of central symmetry,  the first inequality of Corollary~\ref{cor:repi}
 improves a result of \cite{MK18b}, who showed that $\mathcal{N}(X+Y)\leq \frac{\pi e}{2} [\mathcal{N}(X)+\mathcal{N}(Y)]$
for uncorrelated, log-concave random variables $X, Y$.
Other reverse entropy power inequalities for centrally symmetric, log-concave random vectors, 
motivated by analogies to Busemann's theorem in convex geometry,
are discussed in \cite{BNT16}.


\section{An aside on the capacities of additive noise channels}
\label{sec:cvx-cap}

We now make a general observation (we will comment in Section~\ref{sec:disc} on its connection to 
the rest of this paper).
Consider the mutual information $I(X;Y)$ as a function of an input distribution (the distribution of $X$,
which we assume to have a density $p(x)$ with respect
to some reference measure on the input space) and a channel or Markov kernel $K(x,dy)$ 
that represents the behavior of $Y$ conditioned on $X$ (which we assume to have a density
with respect to a reference measure on the output space, so that we may write it as $W(x,y) dy$).
It is well known that $\tilde{I}(p,W)=I(X;Y)$ is concave in $p$ for fixed $W$,
and convex in $W$ for fixed $p$. 

%
%

Suppose the input and output spaces are the same set $G$, and $G$ has a group structure induced
by a binary operation $+$. We will also require the group to have a locally compact, Polish topology on it
compatible with the group operation (i.e., $x+y$ is a continuous function of $(x,y)$),
so that the group has a translation-invariant Haar measure and
there are no measure-theoretic complications with conditional densities.
Choose the channel $W$ given by $W_u(x,y)=u(y-x)$ with $u$ being a probability density function
(all densities are taken with respect to the Haar measure of $G$).
Then the joint distribution of $(X,Y)$ has
a density given by $p(x) u(y-x)$, and $Y$ represents the output of an additive noise channel with
input $X$ and noise $N\sim u$. If we restrict to the world of additive noise channels and
write the mutual information between input and output as $\bar{I}(p,u)=\tilde{I}(p, W_u)$,
then the convexity of $\tilde{I}(p,W)$ in $W$ translates to the convexity of $\bar{I}(p,u)$
in $u$. By Shannon's channel coding theorem, the capacity of the additive noise channel 
with noise density $u$ is given by
$$
C(u)=\sup_{p \in \mathcal{P}}  \bar{I}(p,u) ,
$$
where $\mathcal{P}$ is the class of permissible input distributions.
Since $C$ is a supremum of convex functions (of $u$), and a supremum of a class of convex functions
is always convex (as easily verified by looking, for example, at sublevel sets), we deduce the following
basic fact. (Although very simple, to our surprise, this fact does not seem to have been observed 
before as far as we could tell.)

\begin{theorem}\label{thm:cap-cvx}
The functional  $u\mapsto C(u)$, assigning to the noise density of an additive noise channel 
the capacity under any fixed set of input constraints, 
is convex.
\end{theorem}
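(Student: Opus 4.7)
The argument is essentially contained in the discussion preceding the theorem, and my plan is to fill in the three conceptual steps carefully. First, I would recall the standard fact that for a fixed input density $p$, the mutual information $\tilde I(p,W)$ is convex as a function of the transition kernel $W$. The cleanest route is to write
\[
\tilde I(p,W) \;=\; \int p(x) \, D\bigl(W(x,\cdot)\,\|\, (pW)(\cdot)\bigr)\,\dd x,
\]
and then use the joint convexity of relative entropy $(f,g)\mapsto D(f\|g)$ in the pair $(f,g)$: for two channels $W_0,W_1$ and $\lambda\in[0,1]$, the kernel $W_\lambda=\lambda W_1 + (1-\lambda)W_0$ yields output density $(pW_\lambda)=\lambda(pW_1)+(1-\lambda)(pW_0)$, and joint convexity of $D$ applied pointwise in $x$ followed by integration against $p(x)\,\dd x$ gives $\tilde I(p,W_\lambda)\leq \lambda \tilde I(p,W_1)+(1-\lambda)\tilde I(p,W_0)$.

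Second, I would specialize to additive noise. For $u$ a density on $G$, the kernel $W_u(x,y)=u(y-x)$ depends \emph{affinely} (indeed linearly) on $u$: if $u_\lambda=\lambda u_1+(1-\lambda)u_0$, then $W_{u_\lambda}=\lambda W_{u_1}+(1-\lambda)W_{u_0}$. Composing with the convexity from the first step gives that, for every fixed input density $p$,
\[
\bar I(p,u_\lambda)\;\leq\;\lambda\, \bar I(p,u_1)+(1-\lambda)\bar I(p,u_0).
\]

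Third, I would invoke Shannon's channel coding theorem to write $C(u)=\sup_{p\in\mathcal P}\bar I(p,u)$, where $\mathcal P$ encodes the (fixed, $u$-independent) input constraints. A pointwise supremum of convex functions is convex: for each $p\in\mathcal P$, the map $u\mapsto \bar I(p,u)$ is convex by the previous step, so their supremum is too. Explicitly, for any $\e>0$ pick $p_\e\in\mathcal P$ with $\bar I(p_\e,u_\lambda)\geq C(u_\lambda)-\e$, then
\[
C(u_\lambda)-\e \;\leq\; \lambda\,\bar I(p_\e,u_1)+(1-\lambda)\,\bar I(p_\e,u_0)\;\leq\;\lambda C(u_1)+(1-\lambda)C(u_0),
\]
and letting $\e\downarrow 0$ yields the claim.

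There is no real obstacle here, only a bookkeeping point: one must check that the input-constraint set $\mathcal P$ does not depend on $u$ (so that the supremum is over the same set on both sides of the inequality); this is true for the natural constraints considered in the paper (moment, amplitude, or support constraints on the input), so the argument goes through. The only other subtlety is measure-theoretic---writing $D(W(x,\cdot)\|(pW))$ pointwise in $x$ and integrating---and this is handled by the standing assumption that $G$ is a locally compact Polish group with Haar measure, for which conditional densities behave as expected.
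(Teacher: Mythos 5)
Your proposal is correct and follows essentially the same route as the paper: convexity of mutual information in the channel kernel, linearity of the map $u\mapsto W_u$, and the fact that a pointwise supremum of convex functions is convex. The only difference is that you spell out the first step (via the decomposition of $I(X;Y)$ as an average of relative entropies and joint convexity of $D$), which the paper simply cites as well known.
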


This is a very general statement-- we emphasize that the input/output space is an arbitrary locally compact Polish group,
and the class $\mathcal{P}$ of permissible input distributions is also arbitrary.

Specializing to the case where the input and output spaces are just $\mathbb{R}^d$ with
usual addition, we note that this immediately gives us a bound on the capacity of channels where the additive noise
can be represented as a mixture. 

\begin{corollary}\label{cor:gaus-mix}
Suppose the noise $N$ has a density $u$ on $\mathbb{R}^d$ that is a scale mixture of Gaussians, i.e., 
$$
u(x)= \sum_{i=1}^M \alpha_i g_{\Sigma_i} (x) ,
$$
where we use $g_\Sigma$ to denote a centered Gaussian density with covariance matrix $\Sigma$.
Then the capacity $C_P(N)$ of the additive noise channel with
noise $N$ and input power constraint $P$ satisfies
\begin{equation}\label{eq:gaus-mix-1st-bd}
C_P(N)\leq  \sum_{i=1}^M \alpha_i C_P(Z_i) ,
\end{equation}
where $Z_i\sim g_{\Sigma_i}$.
\end{corollary}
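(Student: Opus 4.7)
The plan is to derive Corollary~\ref{cor:gaus-mix} as a direct application of Theorem~\ref{thm:cap-cvx}, exploiting the fact that the noise density is given as an explicit convex combination of Gaussian densities. Since the corollary specifies a specific ambient group, $\R^d$ with ordinary addition, and a specific input constraint (average power at most $P$), we are in a legitimate instance of the general setup of Theorem~\ref{thm:cap-cvx}: the input/output space is the locally compact Polish group $\R^d$, and $\mathcal{P}$ is the convex set of input distributions with second moment at most $P$. Nothing else is needed to invoke convexity of $u\mapsto C_P(u)$.

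First I would note that the mixture representation $u=\sum_{i=1}^M \alpha_i g_{\Sigma_i}$ with nonnegative weights $\alpha_i$ summing to one exhibits $u$ as a convex combination of the noise densities $g_{\Sigma_i}$ in the space of probability densities on $\R^d$. Next, I would apply the convexity of the capacity functional from Theorem~\ref{thm:cap-cvx} together with the finite-dimensional Jensen inequality to obtain
\[
C_P(u)=C_P\Bigl(\sum_{i=1}^M \alpha_i g_{\Sigma_i}\Bigr)\leq \sum_{i=1}^M \alpha_i\, C_P(g_{\Sigma_i}).
\]
Since $C_P(g_{\Sigma_i})=C_P(Z_i)$ by definition of $Z_i\sim g_{\Sigma_i}$, this yields \eqref{eq:gaus-mix-1st-bd}.

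There is really no obstacle here beyond verifying that the hypotheses of Theorem~\ref{thm:cap-cvx} apply in the specific setting: the input constraint ``$\E|X|^2\leq P$'' defines a fixed admissible set $\mathcal{P}$ that does not depend on the noise, so the capacity $C_P(u)=\sup_{p\in\mathcal{P}} \bar I(p,u)$ is indeed a supremum over a single family of convex-in-$u$ functionals, which is the only ingredient that Theorem~\ref{thm:cap-cvx} needs. The proof is therefore essentially a one-line deduction: convexity of $C_P$ plus Jensen. If one wished, one could also remark that the inequality is typically strict, since the capacity-achieving input distribution depends on the noise, but the straightforward bound above is all that is required for the stated corollary.
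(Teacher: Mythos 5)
Your proposal is correct and matches the paper's own (implicit) argument: the corollary is presented there as an immediate specialization of Theorem~\ref{thm:cap-cvx} to $\R^d$, obtained by applying convexity of $u\mapsto C_P(u)$ to the finite convex combination $u=\sum_i\alpha_i g_{\Sigma_i}$. Nothing further is needed.
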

 
 Recall that the capacity $C(Z_\Sigma)$ of an additive noise channel with Gaussian noise $Z$
 of covariance $\Sigma$ has an explicit formula given by
\begin{equation}\label{eq:gaus-cap}
 C_P(Z_\Sigma)= \max_{\text{tr}(\tilde{\Sigma})\leq P } \frac{1}{2} \log \bigg[\frac{\det(\tilde{\Sigma}+\Lambda)}{\det(\Lambda)}\bigg],
\end{equation}
where $\Lambda$ is the diagonal matrix of eigenvalues that appears in the decomposition 
$\Sigma=U\Lambda U^T$ with $U$ being an 
orthogonal matrix. Moreover, the maximum in the expression for the capacity is achieved 
when $\tilde{\Sigma}$ is diagonal with eigenvalues determined by spectral water-filling.

Note that the expression that is being maximized in \eqref{eq:gaus-cap}
may be rewritten as 
$$
\frac{1}{2} \log \det(I+\tilde{\Sigma} \Lambda^{-1}) ,
$$
by using the fact that the determinant of a product of matrices is the product of
the determinants. It is well known that for any fixed positive-semidefinite $\tilde{\Sigma}$,
this is a convex function of $\Lambda$ (see, e.g., \cite{Kim19} and references therein).
Therefore, if we restrict to diagonal matrices $\Sigma$, then
$C_P(Z_\Sigma)$ is the pointwise maximum of a collection of convex functions,
and hence itself convex in $\Sigma$.

Let us compare the bound from Corollary~\ref{cor:gaus-mix} with the one that 
follows from Proposition~\ref{prop:ihara} in the special case where all the 
covariance matrices $\Sigma_i$ are diagonal. The latter is given by
\begin{equation}\label{eq:gaus-mix-2nd-bd}
C_P(N)\leq  C_P(Z_{\Sigma}) + D(N\|Z_{\Sigma}) ,
\end{equation}
where $\Sigma= \sum_{i=1}^M \alpha_i \Sigma_i$ is the covariance matrix of $N$.
The convexity of $C_P(Z_\Sigma)$ in $\Sigma$ observed in the preceding paragraph
implies that the first term of \eqref{eq:gaus-mix-2nd-bd} is better than 
the bound obtained in \eqref{eq:gaus-mix-1st-bd}; on the other hand,
the second term is \eqref{eq:gaus-mix-2nd-bd} is making it worse.
Thus, in general, the capacity bounds obtained from Corollary~\ref{cor:gaus-mix} 
and Proposition~\ref{prop:ihara} seem to not be comparable\footnote{Note that if we 
further estimate $D(N\|Z)$ using the convexity of relative entropy, we 
can get an upper bound with an explicit expression in \eqref{eq:gaus-mix-2nd-bd}
since the relative entropy between two Gaussian distributions has a closed form.
}.

It would be interesting to explore if there are also inequalities that go in the other direction
to the convexity of capacity. Specifically, in the case of entropy, it is well known that although
entropy is concave, one also has $h(\sum_{i=1}^k p_i f_i) \leq \sum_{i=1}^k p_i h(f_i) + H(p)$,
where $p=(p_1, \ldots, p_k)$ is a discrete probability distribution and $H$ is the discrete Shannon 
entropy. Indeed various refinements of this elementary inequality are known (see, e.g., \cite{MTBMS20},
where mixtures of log-concave distributions are a particular focus).
One wonders if it is possible to obtain, in a similar spirit, constraints on how convex (as a function of the noise distribution)
the channel capacity is.



\section{Discussion}
\label{sec:disc}

A significant consequence (Corollary~\ref{cor:cap}) of our main theorem in this paper is that
any additive noise channel with symmetric, log-concave noise
has a capacity that is at most 0.254 bits per channel use
greater than the capacity of an additive Gaussian noise  
channel with the same noise power. This begs the following question:

\vspace{.2cm}
\noindent{\bf Question 1}: Among all symmetric, log-concave noise distributions with fixed variance, 
which are the distributions that maximize the capacity of an additive noise channel?
identify the maximizers of $C_P(N)$ as one varies $N$ over all noise distributions with fixed variance.
In short, which is the ``best''  symmetric, log-concave noise?
\vspace{.2cm}

We now make various remarks, several of which shed light on why Question 1 is a 
non-trivial question:

\begin{enumerate}

\item Even for very simple and special noise distributions such as the uniform distribution, the capacity of the corresponding additive noise channel
remains unknown! As well known and explained for example in \cite{RM14}, a closed-form formula exists when $A/\Delta$ is an integer if we consider a system where
input amplitude (rather than power) is constrained by $A$ and the noise is uniform on $[-\Delta,\Delta]$, but this says nothing about the case
when one constrains variance rather than peak amplitude. 

\item There is a stream of research in information theory studying capacity-achieving distributions and especially conditions under which these are discrete (see, e.g., \cite{FA18:1}),
and also some studying dependence on system parameters \cite{EPK17:isit}. However, none of these appear to help with the problem at hand.
Indeed, they do not say much even for specific channels such as the uniform noise channel that are clearly candidates for the extremizer, 
since they typically make global assumptions such as positivity of the noise density on the entire real line (this is necessary for much of the 
complex analytic machinery that is typically used in the subject because analytic extensions to a strip are obviously impossible for something like the uniform density).

\item  By Theorem~\ref{thm:cap-cvx}, the capacity of an additive noise channel is a convex function
of the noise distribution. Thus, modulo analytical details, the problem of maximizing the capacity over a class of permissible
distributions (in our case, symmetric, log-concave noise distributions with fixed variance)
is related to the determination of the extreme points of the convex hull of this class.

\item Using the localization technique as outlined in Section~\ref{sec:proof-ent-vs-var}, it should be possible to show that
the best log-concave noise under  a power budget is of the form $e^{-V}$, where $V$ is two-piece affine. 
Unfortunately, given the difficulty of computing the
capacity even for uniform noise, computing capacities corresponding to this family of noises is out of reach,
and thus  the observation does not seem to be of much help with Question 1.

\item We have considered three optimization problems over the class of
symmetric, log-concave noise distributions with fixed variance in this paper:
minimizing entropy (or maximizing relative entropy), maximizing capacity,
and maximizing the isotropic constant. 
Whereas the first two optimization problems both involve maximizing a convex
function, we note that the third does not. Indeed, 
consider the family of densities on $\R$ given by
$f_t= t \frac{e^{-|x|}}{\red 2} + (1-t)\frac{1_{[-a,a]}(x)}{2a}$
for $t\in (0,1)$, i.e., mixtures of  a symmetric exponential distribution
and the uniform distribution on $[-a,a]$. 
For a symmetric density $f$ on $\R$,
the isotropic constant is just 
$f(0)\sqrt{\Var(X)}$ (with $X\sim f$),
which gives
$$
\ell(t):= L_{f_t} = \sqrt{2t + \frac{a^2}{3} (1-t)} \bigg[ \frac{t}{2} + \frac{(1-t)}{2a} \bigg] .
$$
For example, when $a = 1.4$, the second derivative $\ell''(t)$ changes sign,
and so the isotropic constant is not a convex functional.

\item It is interesting to note that, already in dimension 1, the extremizers for the isotropic constant formulation of the slicing problem
are different from those for the relative entropy formulation of it. This discrepancy is explained by the form of Theorem~\ref{thm:D-iso}:
if $L_f$ had been maximized at the uniform distribution, then clearly $D(f)$ would have been as well, or if $D(f)$ had been maximized at
the symmetrized exponential, then $L(f)$ would have been as well. But both of these are one-directional statements and both premises
are false.

\item The upper bound in Proposition~\ref{prop:ihara} is sharp if and only if $Y$ is Gaussian, in which case $D(Y)=0$ and clearly not maximized. 
So our identification of the extremizer for the relative entropy cannot possibly identify the extremizer for Question 1.

\end{enumerate}

There are other interesting questions related to those we have explored that also remain open. 
First, we expect that if the symmetry assumption in our Theorem~\ref{thm:ent-pmom}
is replaced by the assumption that $X$ has zero mean, the extremal distribution should be the exponential distribution on a half-line
(at least for $p=2$). This is strongly suggested by analytical calculations and computational experiments done by Liyao Wang
and the first-named author, but we have been unable to prove it.

Second, we mention the question of maximizing $h(X-X')-h(X)$, where $X'$ is an independent and identically distributed
copy of $X$, over all log-concave distributions on $\R^n$. Note that the objective function here is invariant with respect to
scaling of $X$, so no power constraint is needed. This question is the entropy
analogue of the Rogers-Shepherd inequality in convex geometry, and the best known bound, proved in \cite{BM13:goetze}, 
is $h(X-X')-h(X)\leq n$. However, a sharp bound has not been proved even in dimension 1. It is conjectured in \cite{MK18}
that the extremal distribution for this problem is also the exponential distribution.

Third, it would be interesting to find the extremal distributions that minimize and maximize
$$
\frac{h(X-X')-h(X)}{h(X+X')-h(X)}
$$
over all log-concave distributions on $\R^n$. This objective function (which, once again, is invariant to scaling of $X$)
measures the asymmetry of the distribution of $X$-- note that it must equal 1 if the density of $X$ is
symmetric. The answer is unknown even in dimension 1. It was proved in \cite{MK10:isit, KM14} that the ratio above must lie
in the interval $[\frac{1}{2},2]$, for arbitrary (not necessarily log-concave) distributions on $\R^n$. 
(In fact, this result turns out to hold in remarkable generality-- on any locally compact abelian group with entropies defined with
respect to the Haar measure-- as shown in \cite{MK18}.) One expects even tighter bounds for log-concave distributions.

Finally, there are many open questions about capacity-achieving distributions. 
For example, if the noise $Z$ is symmetric and log-concave, must the capacity-achieving distribution
(i.e., the maximizer of $h(X+Z)$ under a variance constraint on $X$) be log-concave? 
If the answer is yes, then combining with results on the concentration of information \cite{BM11:aop, FMW16, FLM20}, one 
should be able to obtain finite-blocklength bounds for channels with symmetric log-concave noise.
While we are unable to answer the question about capacity-achieving distributions, we note that
Corollary~\ref{cor:cap-loss} does tell us that using a symmetric log-concave input distribution can only cause a very limited loss;
if the answer to the question at the beginning of this paragraph is affirmative, then Corollary~\ref{cor:cap-loss} would be replaced
by the identity $C_P^{LC}(N)=C_P(N)$.


\end{document}